\newcounter{constant} 
 \newcommand{\newconstant}[1]{\refstepcounter{constant}\label{#1}} 
 \newcommand{\useconstant}[1]{k_{\ref{#1}}}
\newcommand{\D}{\mathcal{D}_n}
\newcommand{\eps}{\varepsilon}
\newcommand{\calL}{\mathcal L}
\newcommand{\poly}{{\rm poly}}
\newcommand{\clh}{c_{\textsc{lh}}}
\newcommand{\bin}{{\rm Bin}}
\newtheorem{theorem}{Theorem}
\numberwithin{equation}{section}
\tikzset{onethirdarrow/.style={
 decoration={markings,
 mark= at position 0.4 with {\arrow{#1}} ,
 },
 postaction={decorate}
 }
}
\tikzset{twothirdarrow/.style={
 decoration={markings,
 mark= at position \dfrac{1}{8}5 with {\arrow{#1}} ,
 },
 postaction={decorate}
 }
}
\setlist{nosep, leftmargin=.6cm}
\numberwithin{equation}{section}
\renewcommand{\ge}{\geqslant}
\renewcommand{\le}{\leqslant}
\renewcommand{\geq}{\geqslant}
\renewcommand{\leq}{\leqslant}
\newcommand{\E}[1]{\mathbf E\left[#1\right]}
\renewcommand{\P}[1]{\mathbf{P}_p\left(#1\right)}
\newcommand{\Exp}[1]{\exp\left(#1\right)}
\renewcommand{\O}[1]{O\left(#1\right)}
\renewcommand{\o}[1]{o\left(#1\right)}
\newcommand{\Log}[1]{\log\left(#1\right)}
\newcommand{\ep}{\varepsilon}
\newcommand{\p}{\bm p}
\newcommand{\q}{\bm q}
\renewcommand{\i}{\bm i}
\renewcommand{\j}{\bm j}
\newcommand{\I}[1]{\mathbf 1_{\left\{#1\right\}}}
\newtheorem{tm}{Theorem}[section]
\newtheorem{lm}[tm]{Lemma}
\newtheorem{cl}[tm]{Claim}
\theoremstyle{definition}
\newtheorem{df}[tm]{Definition}
\newcommand{\bR}[1]{\left(#1\right)}
\newcommand{\bC}[1]{\left\{#1\right\}}
\def\abs#1{\big\lvert#1\big\rvert}
\newcommand{\floor}[1]{\left\lfloor#1\right\rfloor}
\newcommand{\ceil}[1]{\left\lceil#1\right\rceil}
\def\dfrac#1#2{
\hbox{\Large$\textstyle\frac{#1}{#2}$}}
\let\originalleft\left
\let\originalright\right
\renewcommand{\left}{\mathopen{}\mathclose\bgroup\originalleft}
\renewcommand{\right}{\aftergroup\egroup\originalright}
\begin{document}

\title{
\Large{\bf Finding a Nash equilibrium of a random win-lose game 
in expected polynomial time}}
\author{
\normalsize	 Andrea Collevecchio\thanks{School of Mathematics, Monash University, Clayton, VIC, Australia:
{\tt andrea.collevecchio@monash.edu}.}
\and
\normalsize	 G\'abor Lugosi\thanks{Department of Economics and Business, 
Pompeu Fabra University;
Barcelona School of Economics;
ICREA, Pg. Lluis Companys 23, 
08010 Barcelona, Spain:
{\tt gabor.lugosi@gmail.com}.}
\and
\normalsize	 Adrian Vetta\thanks{
Department of Mathematics and Statistics,
and School of Computer Science,
McGill University, Montreal,
QC, Canada:
{\tt adrian.vetta@mcgill.ca}.}
\and
\normalsize	Rui-Ray Zhang\thanks{
{\tt rui.ray.zhang@hotmail.com}.
Partially supported by an Alf van der Poorten Travelling Fellowship.}
}
\date{}

\maketitle

\begin{abstract}
A long-standing open problem in algorithmic game theory asks whether or not there is a polynomial
time algorithm to compute a Nash equilibrium in a random bimatrix game.
We study random win-lose games, where the entries of the $n\times n$ payoff matrices are independent and identically distributed (i.i.d.) Bernoulli random variables with parameter $p=p(n)$.
We prove that, for nearly all values of the parameter $p=p(n)$, 
there is an expected polynomial-time algorithm to find a Nash equilibrium in a random win-lose game. 
More precisely, if $p\sim cn^{-a}$ for some parameters
$a,c\ge 0$, then there is an expected polynomial-time algorithm
whenever $a\not\in \{1/2, 1\}$. 
In addition, if $a = 1/2$ there is an efficient algorithm if either $c \le  e^{-52} 2^{-8} $ or $c\ge 0.977$. If $a=1$, then there is an expected polynomial-time algorithm if either $c\le 0.3849$ or $c\ge \log^9 n$.
\end{abstract}


\section{Introduction}\label{sec:intro}
The fundamental question in the development of algorithmic game theory has been the complexity of computing a Nash equilibrium; see~\cite{Pap01}.
A mixed Nash equilibrium (MNE) in a normal-form game is a profile in the set of randomized strategies such that no player, considered individually, has an incentive to change strategy. 
The existence of Nash equilibria is well-settled: Von Neumann's minimax theorem~\cite{VN28} implies the existence of a MNE in a zero-sum, two-player game, and Nash~\cite{Nash51} later famously proved that every finite game has at least one MNE. In other words, the players' strategic interactions can always be resolved into a stable outcome. 
This correspondence between equilibria and stability has since played a crucial role in game theory and economics more generally.

\subsection{Computing a Nash equilibrium}
A mixed Nash equilibrium always exists in a finite game.
However, finding a MNE is computationally
intractable, even for games with only two players. 
In celebrated work, Daskalakis, Goldberg and Papadimitriou~\cite{DGP09} proved the problem to be PPAD-complete for $4$-player games.
Building upon this, 
Chen, Deng and Teng 
\cite{CDT09} obtained the same result 
for $2$-player (bimatrix) games. Specifically, given payoff matrices 
$\bm A=(a(i,j))_{i,j \in [n]}$ for the row player 
and $\bm B=(b(i,j))_{i,j \in [n]}$ for the column player, 
where $[n]:=\{1,2, \ldots, n\}$,
computing 
(or even, approximately computing) a MNE is 
 PPAD-complete. This immediately prompts the question as to whether special classes of bimatrix game are easier to solve. 
Surprisingly, the answer is \emph{no} even for the following very simple classes:\\
\indent$\bullet$ {\em Sparse Games}, where the number of non-zero entries in every row and column is at most a constant \cite{CDT06}. \\
\indent$\bullet$ {\em Win-Lose Games}, where every payoff entry is either zero or one \cite{AKV05}.\\
\indent$\bullet$ {\em Constant Rank Games}, where the sum of the two payoff matrices, $\bm A+ \bm B$, has a constant rank of at least $3$ \cite{Meh14}.

Indeed, the picture is bleak in terms of classes of two-player games that are solvable in polynomial time. As stated, rank-zero games (i.e., zero-sum games) can be solved in polynomial time. Furthermore, Adsul et al.~\cite{AGM11} showed that rank-one games are also polytime solvable. 
Beyond that, positive results are known only in extremely restrictive settings; for example, MNE can be computed efficiently in win-lose games
when an associated digraph representation (see Section \ref{sec:graph}) is planar \cite{AOV07} or has out-degree at most two \cite{CLR06}.

This observation motivates our research. In addition to games of rank at most one, are there any non-trivial classes of
bimatrix games that are solvable in polynomial time?
Given this objective, we study the class of 
{\em random games}, where the entries of the payoff matrices are drawn independently from a distribution.
The investigation of random games stems from the supposition
that these games may have a common structure
allowing for the design of efficient algorithms. 
An illustration of this comes from the work of
B\'{a}r\'{a}ny, Vempala, and Vetta~\cite{BVV07} 
who examined random bimatrix games with entries drawn independently from a normal (or uniform) distribution. 
Their key structural insight was that, with probability $1-O\left(\log^{-1} n\right)$, such games
contain Nash equilibria with supports of cardinality two.
Accordingly, they showed that a simple support enumeration algorithm will find a Nash equilibrium in the random game in $O(n^3 \log \log n)$ time, with high probability.
But, unfortunately, the proven failure probability of their algorithm is far too large to produce an expected polynomial-time algorithm.
In particular, if the algorithm fails to find a small support Nash equilibrium, it is forced to use a generic
all-purpose algorithm to find an equilibrium.
The fastest and best-known such algorithm is the  
Lemke-Howson algorithm \cite{LH64}, which, for any $n\times n$ bimatrix game, finds a Nash equilibrium in time
$O(\clh^n)$, where $\clh$ is the \emph{Lemke-Howson constant}.
It is known~\cite[Corollary 2.13]{von2002computing} that an upper bound on this constant is
\begin{equation}\label{eq:clh}
\clh \le 2.598.
\end{equation}
Further, there are games where the Lemke-Howson algorithm does take exponential time~\cite{savani2006hard}.

Consequently, B\'{a}r\'{a}ny, Vempala and Vetta~\cite{BVV07} posed the question of finding an expected polynomial-time algorithm for random games, either via a new algorithm or by providing a better analysis 
of their algorithm. This problem has remained unsolved for 20 years, even for random win-lose games, where the bimatrix entries
are drawn independently from a Bernoulli distribution.
In this paper, our contribution is an expected polynomial-time algorithm for random win-lose games, for nearly all choices of the Bernoulli parameter $p=p(n)$.

\subsection{Overview and results}
We study random win-lose games, in which the entries of the $n\times n$ payoff matrices $\bm A$ and $\bm B$ 
are independent Bernoulli random variables with parameter $p=p(n)$.
In other words, each entry is $1$ with probability $p$, and $0$ with probability $1-p$, independently.
We prove that for nearly all values of the parameter $p$, there is an expected polynomial-time algorithm that computes a Nash equilibrium.
In particular, we show that for $p\sim cn^{-a}$
there is an expected polynomial-time algorithm 
for all parameters
$a \in (0,1/2) \cup (1/2,1) \cup (1, \infty),$ and all $c\ge 0$.
This leaves the special cases of $a=1/2$ and $a=1$.
If $a = 1/2$ we prove that the algorithm finds a MNE in polynomial time if either $c \le  e^{-52} 2^{-8} $ or $c>0.977$.
Furthermore, if $a=1$ there is an expected polynomial-time algorithm for $c\le 0.3849$ and $c\ge \log^9 n$.


The key observation in proving these results is that
win-lose games have a directed bipartite graph representation;
see Section~\ref{sec:graph} for details.
Further, we show that various combinatorial structures in these graphs correspond to Nash equilibria.
This allows us to design a collection of algorithms that search for
these combinatorial structures.
These algorithms have different ranges with respect to
the parameter $p$ in which they succeed in finding these structures, and hence a Nash equilibrium, in expected polynomial time. 
Table~\ref{tab:results} summarizes our results.

\begin{table}[h]
\centering
\begin{tabular}{c|c|c}
& Constraints on $p$ & Search Strategy \\ \hline\hline
Regime 0: Section \ref{sec:reg0} & $p \le 0.3849 n^{-1}$ & \sc{PNE} \\ \hline
Regime 1: Section \ref{sec:reg1} 
& $(\log n)^{9} n^{-1} 
\le p
\le
e^{-4/3} n^{-2/3} $
& cycles of $p$-dependent length \\ \hline
The Middle Gaps: Section \ref{sec:gap} & 
\makecell{
$e^{-4/3} n^{-2/3} \le p \le 2^9 n^{-2/3}$ \\
$e^{3/2} n^{-3/4 }
\le p \le
e^3 n^{-3/4}$}
& $4$-cycles and $6$-cycles \\ \hline
Regime 2: Section \ref{sec:reg2} & 
$2^9 n^{-2/3} \le p \le  e^{-52} 2^{-8}  n^{-1/2}$
& $4$-cycles with localized pair \\ \hline
Regime 0: Section \ref{sec:reg0} 
& $p \ge 0.977 n^{-1/2}$ & \sc{PNE} \\ \hline
\end{tabular}
\caption{Summary of Results.}
\label{tab:results}
\end{table}

In Section~\ref{sec:generic}, we present a generic 3-step algorithm for finding a Nash equilibrium. In the first step, it searches for pure Nash equilibria. If that fails, in the second step, it searches for mixed Nash equilibria with supports of a specified cardinality $\ell$. If that also fails, then in the third step, it runs the Lemke-Howson algorithm. The choice of $\ell$ is critical and essentially determines what combinatorial structures the algorithm is looking for. 

For the choice of $\ell=1$, the algorithm just searches for pure Nash equilibrium before executing the Lemke-Howson algorithm. We prove, in Section~\ref{sec:reg0}, that this trivial algorithm is efficient when $p$ is either extremely small ($p \le 0.3849 n^{-1}$) or is very large ($p \ge 0.977 n^{-1/2}$).
This search for PNE we call ``Regime 0".

The search for a mixed Nash equilibrium is non-trivial.
The main combinatorial structure we search for is a special type
of directed cycle that we call a {\em stable cycle}; see Section~\ref{sec:graph}.
When $p$ is moderately small ($\log^9 n \, n^{-1} 
\le p \le e^{-4/3} n^{-2/3}$), we search for long stable cycles
whose length is dependent on $p$. Thus, we run the generic
algorithm where $\ell$ is dependent on $p$, which we dub ``Regime 1". The resultant performance analysis, given in Section~\ref{sec:reg1}, then involves the study of disjoint cycle packings.

When $p$ is moderately large ($2^9 n^{-2/3} \le p \le  e^{-52} 2^{-8}  n^{-1/2}$) we search for stable cycles
of length $\ell=4$. To prove this is efficient, we again study
disjoint cycle packings, but with a caveat. We insist that each cycle
in packing contains a pair of vertices that are close in some ordering (a {\em localized pair}). This ``Regime 2" is covered in Section~\ref{sec:reg2}.

Notice there is a small gap between Regimes 1 and 2, namely
$e^{-4/3} n^{-2/3} \le p \le 2^9 n^{-2/3}$. Moreover, there is a
small interval within ``Regime 1", specifically $e^{3/2} n^{-3/4 }
\le p \le e^3 n^{-3/4}$, where our prior analysis fails.
To rectify this, in Section~\ref{sec:gap}, we apply a
probabilistic method called sprinkling. We then show how searching for both $4$-cycles and $6$-cycles will give an expected polynomial-time algorithm for these intervals.

Finally, the reader may observe that there are two extremely small intervals for $p$ that remain open.
First when $p\sim c/n$ for a constant $c\ge 0.3849$ or $pn \to \infty$ but $p=O(\log^9n/n)$, and second 
when $p \sim cn^{-1/2}$ for $c \in ( e^{-52} 2^{-8} , 0.977)$.
These two cases are intriguing challenges for future research.
%
%
%
%
%
%
%
%
%
%
%
%

\subsection{Related work on random games}

Random games have been studied as far back as the works
of Goldman~\cite{Gold57} and Goldberg et al.~\cite{GGN68}.
The literature focuses predominantly on pure Nash equilibria (PNE), that is, profiles of pure strategies such that none of the players, picked individually, 
has an incentive to deviate. Furthermore, much work has been devoted to describing the expected number of PNE and how to find them through adaptive/learning processes, such as best response dynamics or better response dynamics. If we consider a game with i.i.d. random payoffs from a continuous distribution, then when the number of players (or the number of actions) grows to infinity, the number of PNE is asymptotically Poisson(1); see, e.g., Rinott and Scarsini \cite{R2000}.
The number of PNE in the case of two-actions, a large number of players, and i.i.d. payoff with atoms was analysed in 
Amiet et al.~\cite{ACSZ19}. The latter provides a central limit theorem for the number of PNE and shows that for certain regimes, PNE can be found by simple learning dynamics, with high probability. 
The number of mixed Nash equilibria has also been studied in detail; see, for example, the works of McLennan~\cite{McL05}, and McLennan and Berg~\cite{MB05}.

\section{Nash equilibria in win-lose games}\label{sec:generic}

In this section, we formally define win-lose bimatrix games and
the concept of a Nash equilibrium. We then present a generic
algorithm for finding a Nash equilibrium in a game.

\subsection{Preliminaries}
Consider a game with two players, where 
the number of available strategies for each player is $n$. 
We refer to the two players as the row and the column player
and denote their payoff matrices by $\bm A = (a(i, j))_{i, j \in [n]}$ 
and $\bm B = (b(i, j))_{i, j \in [n]}$ respectively, recalling $[n]=\{1, 2, \ldots, n \}$.
This is known in the literature as a {\em bimatrix game}.
In this paper, we focus on {\em win-lose} bimatrix games where every entry of
$\bm A$ and $\bm B$ is in $\{0,1\}$. Note,
for simplicity, we assume both players have the same number of strategies, 
though our techniques may be generalized to study games with a different number 
of pure strategies for each player.

A \emph{mixed strategy} for a player is 
a probability distribution over the set of pure strategies,
and is represented by a probability vector $\p = (p_1, p_2, \ldots, p_n)$, 
where $p_i$ denotes the probability that the player chooses the $i$th pure strategy
for $i \in [n]$; that is the pure strategy row $i$ for the row player or the pure strategy column $i$ for the column player. 
The \emph{support} of $\p$ is the set of indices $i\in [n]$ with $p_i\neq 0$.
For a mixed strategy pair $( \p, \q )$, the payoff to the row player 
is the expected value of a random variable that takes the value $a(i,j)$ with probability $p_i q_j$. 
In other words, the payoff to the row player is $\p^\intercal \bm A \q$,
where $\p^\intercal$ denotes the transpose of the vector $\p$.
Similarly, the payoff of the column player is $\p^\intercal \bm B \q$. A stable solution in the resultant game, namely a {\em Nash equilibrium}, is defined as follows.
\begin{tcolorbox}
\begin{df}
A pair of mixed strategies $(\p, \q)$ is a \emph{Nash equilibrium} if and only if\\
 $\bullet$ for every mixed strategy $\p_0$ of the row player,
$\p_0^\intercal \bm A \q \le \p^\intercal \bm B \q$, and\\
$\bullet$ for every mixed strategy $\q_0$ of the column player,
$\p^\intercal \bm A \q_0 \le \p^\intercal \bm B \q$.
\end{df}
\end{tcolorbox}

\subsection{A generic algorithm}
\label{sec:generic}
The basic idea of our approach is to look for sparsely supported Nash equilibria. 
For example, suppose a game has a Nash equilibrium whose supports have cardinality $\ell$ for both players, namely $\ell \times \ell$ supports. Then such an equilibrium can be found in time $O(n^{2\ell} \poly(n))$, where $\poly(n)$ stands for a polynomial in $n$, by the following reasoning. Enumerate over
all $\binom{n}{\ell}^2
= \O{ n^{2\ell} }$ supports of size $\ell\times \ell$ and test if any induce a 
Nash equilibrium. For each pair of supports, this can be done in time $\poly(n)$ via a standard linear program. (Alternatively, for our purposes, it will suffice to use combinatorial algorithms
that search for Nash equilibria corresponding to the aforementioned stable cycles.)
If this approach fails, then we simply run the Lemke-Howson algorithm to find a Nash equilibrium with large supports.
This idea leads to the following generic algorithm for finding a Nash equilibrium.

%
\begin{tcolorbox}[title={The Generic Algorithm with Parameter $\ell$}]
\begin{enumerate}
\item Search the game for PNE; if one is found, output that equilibrium.
\item
If no PNE is found in Step 1, then exhaustively search for MNE with $\ell\times \ell$ supports. 
\item[]
If such a MNE is found, then output that equilibrium.
\item
If no MNE is found in Step 2, then run the Lemke-Howson algorithm and output the equilibrium obtained.
\end{enumerate}
\end{tcolorbox}

The expected running time of the generic algorithm is at most
\begin{align}
&O(n^2) 
+ \O{\binom{n}{\ell}^2 \poly(n) } \P{\mbox{no PNE is found in Step 1}} \notag\\
&\qquad\qquad\qquad\qquad
\qquad\qquad\qquad\qquad
+ \O{ \clh^n } \P{\mbox{no MNE is found in Step 2}},
\label{time0}
\end{align}
where $\clh \le 2.598$ is the Lemke-Howson constant defined in \eqref{eq:clh}.
To understand $O(n^2)$ running time for Step 1, consider when
a row-column pair $\{r_i, c_j\}$ is a PNE
in a bimatrix win-lose game.
If $a(i,j)=b(i,j)=1$ then the pair is a PNE.
If $a(i,j)=1$ and $b(i,j)=0$ then the pair is a PNE if and only
if the $i$th row of $\bm B$ is all zero; symmetrically, 
if $a(i,j)=0$ and $b(i,j)=1$ then the pair is a PNE if and only
if the $j$th column of $\bm A$ is all zero.
If $a(i,j)=b(i,j)=1$ then the pair is a PNE.
If $a(i,j)=b(i,j)=0$ then the pair is a PNE if and only
if the $j$th column of $\bm A$ and the $i$th row of $\bm B$ are both all zero. The expected running times of Step 2 and Step 3 follow the prior discussions.

For example, suppose for a moment that $\ell$ is a fixed constant.
Then in order to show that for a certain value of $p$, the generic algorithm runs in expected polynomial time,
it suffices to show that the probability that the game has no $\ell \times \ell$ Nash equilibrium is 
bounded by 
$\clh^{-n}$ 
for sufficiently large $n$. To this end, we consider various regimes of the parameter $p=p(n)$.

We also remark that any advance in improving the upper bound on $\clh$
will automatically widen the intervals of success for the Regimes shown in Table~\ref{tab:results}.

\section{Regime $0$: pure Nash equilibria}
\label{sec:reg0}

We are now ready to give an algorithm for 
a random win-lose game, where the probability a payoff entry is $1$ is $p=p(n)\in [0,1]$. Recall, to achieve an expected polynomial running time, we will apply the generic algorithm, defined in Section~\ref{sec:generic}, with a different parameter $\ell$, depending upon the value of $p$. 
In this section, we begin with the two simplest cases where $p$ is
either very small or large (Regime $0$). In both these cases, we prove that a random win-lose game has a pure strategy Nash equilibrium with
sufficiently large probability to guarantee that 
the generic algorithm with parameter $\ell=1$ does indeed find a Nash equilibrium in expected polynomial time. This is formalized by the following theorem.
\begin{tcolorbox}
\begin{theorem} 
\label{thm:pure}
Suppose that 
\begin{align*}
\mbox{either } \quad \inf_n \, p \sqrt{n} \ge \sqrt{ \log \clh },
\quad\text{ or }\quad
\sup_n \, p n \le 1/\clh.
\end{align*}
Then, the generic algorithm 
of Section~\ref{sec:generic} with $\ell=1$ finds a Nash equilibrium in expected time $O(n^2)$.
\end{theorem}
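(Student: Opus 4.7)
The plan is to bound the failure probability $\P{\text{no PNE exists}}$ by $O(\clh^{-n})$ in each of the two regimes; plugging this bound into the expected-running-time decomposition \eqref{time0} (for $\ell=1$, Step~2 merely duplicates Step~1 and so the only genuine failure event is the absence of a PNE) then yields
\[
O(n^2) \;+\; O(\clh^{n})\cdot O(\clh^{-n}) \;=\; O(n^2),
\]
which is the claim. Thus all the work is in lower-bounding the probability that a PNE exists.

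In the high-density regime $p\sqrt n \ge \sqrt{\log \clh}$, I would use the observation that any index pair $(i,j)$ with $a(i,j)=b(i,j)=1$ is automatically a PNE: both players already receive the maximum possible payoff and cannot profitably deviate. The $n^2$ events $\{a(i,j)=b(i,j)=1\}$ involve disjoint entries of the independent matrices $\bm A$ and $\bm B$, so they are mutually independent, each of probability $p^2$. Hence
\[
\P{\text{no PNE}} \;\le\; (1-p^2)^{n^2}\;\le\; \Exp{-p^2 n^2}\;\le\; \clh^{-n},
\]
where the final inequality uses the hypothesis $p^2 n \ge \log \clh$.

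In the low-density regime $pn \le 1/\clh$, I would instead produce a PNE from all-zero lines: if column $j$ of $\bm A$ is identically zero and row $i$ of $\bm B$ is identically zero, then $a(i,j)=b(i,j)=0$ and both best-response conditions hold vacuously, so $(i,j)$ is a PNE. The $n$ events ``column $j$ of $\bm A$ is all zero'' are independent (they involve disjoint entries of $\bm A$), so
\[
\P{\bm A \text{ has no all-zero column}} \;=\; \bR{1-(1-p)^n}^n \;\le\; (pn)^n \;\le\; \clh^{-n},
\]
the middle inequality being Bernoulli's $(1-p)^n\ge 1-pn$. The same bound holds for the analogous event for $\bm B$, and since $\bm A$ and $\bm B$ are independent, a union bound gives $\P{\text{no PNE}} \le 2\clh^{-n}$.

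Neither step is genuinely delicate: the only creative input is identifying the right sufficient combinatorial event for a PNE in each regime --- an automatic $(1,1)$-entry at high density, and an all-zero column of $\bm A$ together with an all-zero row of $\bm B$ at low density --- after which the probability estimates reduce to elementary applications of independence and Bernoulli's inequality. The thresholds in the hypotheses are precisely calibrated so that these elementary bounds beat $\clh^{-n}$, which also explains why the argument does not close the small gap $pn\in(1/\clh,\sqrt{\log\clh})$ left open near $p\sim cn^{-1/2}$ and $p\sim c/n$.
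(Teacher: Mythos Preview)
Your proposal is correct and follows essentially the same route as the paper's proof: in both regimes you identify the same sufficient combinatorial condition for a PNE (a $(1,1)$-entry at high density; an all-zero column of $\bm A$ together with an all-zero row of $\bm B$ at low density) and bound the failure probability by $O(\clh^{-n})$. Your low-density computation is in fact marginally cleaner than the paper's, which routes through $1-x\ge e^{-x/(1-x)}$ and $1-e^{-x}\le x$ where Bernoulli's inequality $(1-p)^n\ge 1-pn$ already suffices.
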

\end{tcolorbox}

\begin{proof}
Recall that the payoff matrices $\bm A$ and $\bm B$ are independent $n \times n$ Bernoulli random matrices
with parameter $p$.
If $a(i,j)=b(i,j)=1$
for some $i,j\in [n]$, then the pair $(i,j)$ of actions forms a PNE, as the corresponding payoff is $(1,1)$. 
The probability that no such payoff exists equals
\begin{align}
\P{ \mbox{no PNE in } (\bm A, \bm B) }
= (1 - p^2)^{n^2} \le e^{-p^2n^2}.
\label{pne}
\end{align}
This is bounded by $\clh^{-n}$ whenever 
\begin{align*}
p \ge \sqrt{\dfrac{\log(\clh)}{n}} \ge \dfrac{0.977}{\sqrt{n}}.
\end{align*}

Another simple instance of a PNE is when
there is an all-$0$ row in the matrix $\bm B$ \emph{and} there is
an all-$0$ column in the matrix $\bm A$. 
More precisely, suppose for
some $i,j\in [n]$, both 
$a(i,j') = 0$
and 
$b(i',j)=0$
for all $i',j'\in [n]$. Then
the pair $(i,j)$ forms a PNE.  
Since 
\begin{align*}
\P{ \mbox{$\bm B$ has an all-$0$ row} }
&= 1 - \P{ \mbox{all rows of $B$ contain some } 1 } \\
&= 1 - \P{ \mbox{row $1$ has some } 1 }^n \\
&= 1 - (1 - \P{ \mbox{row $1$ has only $0$s} } )^n \\
&= 1 - (1 - (1-p)^n )^n~,
\end{align*}
the probability that no such pair $(i,j)$ exists equals
\begin{eqnarray*}
 1- ( 1 - (1 - (1-p)^n )^n )^2
\le
 1 - \left( 1 - t^n \right)^2,
\end{eqnarray*}
where 
\begin{align*}
t = 1 - \Exp{ - \frac{np}{1-p} },
\end{align*}
since $1-x \ge e^{-\frac{x}{1-x}}$ for all $x < 1$.
As $(1-x)^2 \ge 1-2x$ for all $x \in \mathbb R$,
we have
\begin{eqnarray*}
 1- ( 1 - (1 - (1-p)^n )^n )^2
\le 2t^n
= 2 \left(1 - \Exp{ - \frac{np}{1-p} } \right)^n.
\end{eqnarray*}

Hence
if $p\le \alpha/n$ for some constant $\alpha >0$, then using the fact that $1-e^{-x} \le x$ for all $x>0$, we have
\[
 1- ( 1 - (1 - (1-p)^n )^n )^2 \le 2 \left(\frac{\alpha}{1-\alpha/n} \right)^n 
 = 
2 \alpha^n \left(1- \dfrac\alpha n \right)^{-n} 
\le 3e^\alpha \alpha^n,
\]
as $(1- \alpha /n )^{-n}$ converges to $e^\alpha$ as $n\to\infty$. Hence, for all sufficiently large $n$, we have $(1- \alpha/ n)^{-n} \le 3 e^\alpha/2$.
Now $3e^\alpha \alpha^n$ is bounded by a constant multiple of $\clh^{-n}$ whenever $\alpha \le 1/\clh$.
\end{proof}

\section{Combinatorial structures and MNE}\label{sec:graph}
The results in Theorem \ref{thm:pure} described in the previous section do not extend to other ranges of $p$. 
Our bounds suggest that in other regimes, the probability 
that a PNE exists is not large enough to guarantee a polynomial expected running time for the algorithm. Instead, we look for MNE.
To do this, in this section, we explain how certain MNE correspond
to specific combinatorial structures in an auxiliary directed graph
associated with the bimatrix game $(\bm A, \bm B)$.
Specifically, these MNE have supports that correspond to a cycle in the graph. Moreover, these cycles require additional global properties with respect to the graph to ensure best response stability, so we call them {\em stable cycles}.

\subsection{A directed bipartite graph representation}
As stated, it is useful to represent a win-lose game by a directed bipartite graph.
Recalling that $\bm A$ and $\bm B$ are two $n \times n$ Bernoulli random matrices,
we define a random directed bipartite graph $\D = \mathcal D( \bm A, \bm B )$ as follows.
The directed bipartite graph $\D$ has 
a vertex bipartition, $R$ and $C$,
both containing $n$ vertices.
We call $R$ and $C$ the left and right parts, respectively.
There is a vertex $r_i \in R$ for each row $i \in [n]$ 
and a vertex $c_j \in C$ for each column $j \in [n]$. 
The digraph $\D$ contains arc $(r_i, c_j)$ if and only if $ b(i, j) = 1$; 
similarly, $\D$ contains arc
$(c_j, r_i)$ if and only if $ a(i, j) = 1$. 
We have that 
$
\mathbf{P}_p( (r_i, c_j)) \in E) = \mathbf{P}_p( (c_j, r_i)) \in E)=p$, and each arc is included independently of the others. It is worth noting that if a pair of pure strategies $\{r_i, c_j\}$ is contained in a directed cycle of length two, then $(i, j)$ is a PNE.

\subsection{Stable cycles}

In our quest for finding mixed Nash equilibria, we start with a simple observation that is
crucial for the rest of the paper: A certain simple configuration 
corresponds to a Nash equilibrium. This configuration, which we call a \emph{stable cycle}, is defined as follows.

\begin{tcolorbox}
\begin{df}\label{def:unst}
Fix $\ell \in [n]$.
A {\em stable $2\ell$-cycle} in $\D$, say $H$,
is a directed cycle in $\D$ on $2\ell$ vertices 
such that no other vertex 
in $V(\D)\setminus V(H)$ is a common out-neighbour of any two vertices of cycle $H$.
In other words, 
let $A = V(H) \cap R$ and $B = V(H) \cap C$;
for each $i\in R \setminus A$, there is at most one $j\in B$
such that $(c_j, r_i) \in \D$ \emph{and} for each $j\in C\setminus B$ there is at most one $i\in A$
such that $(r_i, c_j)\in \D$.\\
A $2\ell$-cycle that is not stable is called \emph{unstable}.
\end{df}
\end{tcolorbox}
%
%
%
%
%
%

Next, we show that every stable cycle corresponds to a mixed Nash equilibrium.
In the remaining part of the paper, we analyse the existence of stable cycles of length $2 \ell$.
\begin{tcolorbox}
\begin{lm}\label{lm:stableNE}
For every stable $2\ell$-cycle, there exists a mixed strategy Nash equilibrium $(\p,\q)$, whose support
corresponds to the vertices on the left part and right part of the cycle, respectively.
\end{lm}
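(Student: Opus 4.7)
The plan is to exhibit a candidate Nash equilibrium $(\p,\q)$ whose supports are precisely $A=V(H)\cap R$ and $B=V(H)\cap C$, taking $\p = \mathrm{Unif}(A)$ and $\q = \mathrm{Unif}(B)$, and to verify the two defining inequalities directly from the cycle structure. I would first label the arcs of $H$ as
\[
r_{i_1}\to c_{j_1}\to r_{i_2}\to c_{j_2}\to\cdots\to r_{i_\ell}\to c_{j_\ell}\to r_{i_1},
\]
so that $b(i_k,j_k)=1$ and $a(i_{k+1},j_k)=1$ for every $k\in[\ell]$, with indices read cyclically modulo $\ell$. Assuming that $H$ is chordless in $\D$, these are exactly the $1$-entries of the two $\ell\times\ell$ submatrices $\bm A[A,B]$ and $\bm B[A,B]$, so each of these submatrices is a permutation matrix.

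With this structure in hand, both Nash inequalities reduce to short calculations. For indifference on the supports, the row player's expected payoff from the pure strategy $r_{i_k}$ against $\q$ is
\[
\sum_{j\in B} q_j\,a(i_k,j) \;=\; \tfrac{1}{\ell},
\]
which is independent of $k$; by the symmetric calculation, the column player's payoff from each $c_{j_k}$ against $\p$ is also $1/\ell$. For the no-deviation condition I would invoke the stability hypothesis: for every $i'\in R\setminus A$ it gives $|\{j\in B:(c_j,r_{i'})\in\D\}|\le 1$, so the row player's payoff from deviating to $r_{i'}$ is at most $1/\ell$, and symmetrically for any $c_{j'}\in C\setminus B$. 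Combining the two observations gives the two Nash inequalities, so $(\p,\q)$ is a mixed Nash equilibrium whose supports are exactly $A$ and $B$.

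The main obstacle I anticipate is the chord-free assumption on $H$. The stability condition as stated only constrains common out-neighbours \emph{outside} $V(H)$, so in principle $H$ could contain a chord, which would contribute an extra $1$-entry to one of the submatrices and let some vertex in the support strictly prefer another support action, breaking the uniform-distribution argument. There are two natural remedies: one can either strengthen the definition of a stable $2\ell$-cycle to require $H$ to be induced in $\D$ (a harmless change, since the probability bounds in later sections only tighten), or, keeping the present definition, replace the uniform distributions by the solution of the indifference linear system on the submatrices, which remains well-defined and non-negative under the stability hypothesis. Either route leaves the rest of the argument as the one-line verification above.
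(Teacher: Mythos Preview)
Your uniform-distribution argument is clean when $H$ is chordless, and you correctly isolate chords as the obstruction. But your second remedy does not hold: the stability hypothesis in Definition~\ref{def:unst} constrains only common out-neighbours \emph{outside} $V(H)$, so it imposes no restriction whatsoever on the $\ell\times\ell$ submatrices $\bm A[A,B]$ and $\bm B[A,B]$. With chords present these can be arbitrary $0/1$ matrices dominating a permutation matrix, and there is no reason the indifference system should admit a nonnegative solution with full support---indeed, a single chord such as $a(i_1,j_3)=1$ already forces $q_{j_3}=0$ in any indifference solution. Your first remedy (requiring $H$ to be induced) would work but proves a different lemma.

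The paper avoids the issue by not constructing $(\p,\q)$ at all: it invokes Nash's theorem on the $\ell\times\ell$ subgame induced by $A\times B$ to obtain \emph{some} equilibrium $(\p,\q)$ there, with no control over its form or support. The extension to the full game then uses only the cycle structure, not chordlessness: for any $r\in R\setminus A$, stability gives at most one $c\in B$ with $(c,r)\in\D$, and the cycle successor $r'\in A$ of $c$ also has $(c,r')\in\D$, so $r'$'s payoff against $\q$ is at least $q_c$, which is exactly $r$'s payoff. Since $\p$ already best-responds to $\q$ within $A$, it weakly dominates $r'$ and hence $r$. This domination argument is indifferent to whatever chords $H$ may carry. (Note that the paper's proof, like your second remedy, does not actually establish full support on $A$ and $B$; what it delivers, and what the algorithm needs, is an equilibrium with support \emph{contained} in $A\times B$.)
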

\end{tcolorbox}
\begin{proof} 
Let $A \subset R$ and $B \subset C$ be such that $A \cup B$ is the vertex set of a stable $2\ell$-cycle in $\D$.
Consider the subgame induced by the rows in $A$ and the columns in $B$. Let this subgame have a Nash equilibrium, in which 
the row player selects the probability distribution $\p$ 
and the column player selects the probability distribution $\q$. 
Such an equilibrium exists by Nash's theorem. We claim
that $( \p, \q )$ is a Nash equilibrium for the whole game
(where $\p$ and $\q$ are extended to place zero probability on
the strategies in $R\setminus A$ and $C\setminus B$, respectively). To see this, let $r$ be a column in $R\setminus A$. As the cycle $A\cup B$ is stable, $r$ is the out-neighbour of at most one vertex in $B$. If $r$ is not the out-neighbour of any vertices in $B$, then the row gives payoff zero against $\q$. So we may assume $r$ is the out-neighbour of exactly one vertex $c$ in $B$. But $c$ has an out-neighbour $r'$ on the stable cycle $A\cup B$. Thus the payoff of $r'$ against $\q$
is at least as large as the payoff of $r$ against $\q$ (possibly $r'$ has a greater response payoff if it is the out-neighbour of more than one vertex on the cycle). Thus the pure strategy $r$ is not a better strategy than $\p$ in response to $\q$.
So $\p$ is the best response to $\q$ in the whole game.
A similar argument shows $\q$ is the best response to $\p$ in the whole game.
\end{proof}


Ideally, if $\ell$ is small, then searching for stable $2\ell$-cycles can be done quickly. However, as we will see in Section~\ref{sec:reg1}, we will often be forced to search for
relatively large stable $2\ell$-cycles. Even more critically,
if our search fails, then we must run the exponential time
Lemke-Howson algorithm. As such, the following lemma will be useful
in bounding the overall expected running time~\eqref{time0} of the generic algorithm. 

\begin{tcolorbox}
\begin{lm}
\label{Pfail}
Let integer $\ell \ge 1$.
Then for any $0 < \beta < 1/\ell$,
\begin{align*}
\P{ \mathcal{D}_n \text{ has no 
stable $2\ell$-cycle} } 
&\le 
\P{ \mathcal{D}_n \text{ has less than $\beta n$ disjoint $2\ell$-cycles} } \\
&\qquad
+ \P{ \mathcal{D}_n \text{ has at least $\beta n $ disjoint unstable $2\ell$-cycles} }.
\end{align*}
\end{lm}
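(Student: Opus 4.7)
The plan is to express the event that $\D$ has no stable $2\ell$-cycle as a subset of the union of the two events on the right-hand side, and then invoke the union bound. The key (and essentially only) structural observation needed is the following: in any family of pairwise vertex-disjoint $2\ell$-cycles in $\D$, if even one member happens to be stable, then $\D$ trivially contains a stable $2\ell$-cycle. Contrapositively, on the event that $\D$ has no stable $2\ell$-cycle, every cycle in every such family must be unstable.

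First I would introduce shorthand. Let $E$ be the event $\{\D \text{ has no stable } 2\ell\text{-cycle}\}$, let $F_1$ be the event $\{\D \text{ has fewer than } \beta n \text{ pairwise disjoint } 2\ell\text{-cycles}\}$, and let $F_2$ be the event $\{\D \text{ has at least } \beta n \text{ pairwise disjoint unstable } 2\ell\text{-cycles}\}$. The entire content of the lemma reduces to the set-theoretic inclusion $E \subset F_1 \cup F_2$. To verify it, suppose $E$ holds and $F_1$ fails; then $\D$ admits a family of at least $\beta n$ pairwise disjoint $2\ell$-cycles, and by the observation above every cycle of this family is unstable, so $F_2$ holds. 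Subadditivity then yields $\P{E} \le \P{F_1} + \P{F_2}$, which is the claim.

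There is no real obstacle to this argument; it is a one-line set inclusion followed by a union bound. The only point worth a brief remark is the role of the hypothesis $\beta < 1/\ell$. Any packing of pairwise vertex-disjoint $2\ell$-cycles in $\D$ uses $2\ell$ vertices per cycle out of $2n$ total, so such a packing has size at most $n/\ell$. The constraint $\beta < 1/\ell$ therefore ensures $\beta n < n/\ell$, which keeps $F_2$ potentially non-vacuous and the bound meaningful (it is not needed for the inclusion itself). In the applications in Sections~\ref{sec:reg1}--\ref{sec:reg2}, this hypothesis will allow one to choose $\beta$ so that both $\P{F_1}$ and $\P{F_2}$ are simultaneously bounded by a constant multiple of $\clh^{-n}$, which is exactly what the generic algorithm's running time analysis via \eqref{time0} requires.
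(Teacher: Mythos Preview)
Your argument is correct and is essentially identical to the paper's own proof: both establish the inclusion $E\cap F_1^c\subset F_2$ (the paper phrases this via the law of total probability, you via a direct set inclusion plus the union bound) and then take probabilities. Your additional remark that the hypothesis $\beta<1/\ell$ is only needed to keep $F_2$ non-vacuous, rather than for the inclusion itself, is accurate and not made explicit in the paper.
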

\end{tcolorbox}
\begin{proof}
By the law of total probability, we have
\begin{align*}
&\P{ \mathcal{D}_n \text{ has no 
stable $2\ell$-cycle} } \\
&= 
\P{ \mathcal{D}_n \text{ has no 
stable $2\ell$-cycle and $\le \beta n$ disjoint $2\ell$-cycles} } \\
&\qquad+
\P{ \mathcal{D}_n \text{ has no
stable $2\ell$-cycle and $\ge \beta n$ disjoint $2\ell$-cycles} } \\
&\le
\P{ \mathcal{D}_n \text{ has $\le \beta n$ disjoint $2\ell$-cycles} } \\
&\qquad+
\P{ \mathcal{D}_n \text{ has no
stable $2\ell$-cycle and $\ge \beta n$ disjoint $2\ell$-cycles} }\\
&\le
\P{ \mathcal{D}_n \text{ has $\le \beta n$ disjoint $2\ell$-cycles} }
+ \P{ \mathcal{D}_n \text{ has $\ge \beta n$ disjoint unstable $2\ell$-cycles} }.
\end{align*}
This completes the proof.
\end{proof}

\section{Regime 1: disjoint stable $2\ell$-cycles}
\label{sec:reg1}

We are now ready to begin our search for mixed Nash equilibria.
In this section, we consider the case where $p$ is moderately small. Specifically, we assume $p = n^{-1 + \delta}$,
where $\delta=\delta(n)$ satisfies
\begin{align}
\liminf_{n \to \infty}
\dfrac{ \delta \log n}{\log \log n}
> 8, 
\quad 
\delta 
\le \dfrac13
\bR{1
- \dfrac{4}{\log n}},
\mbox{ and }
\delta \not\in
\bR{ \dfrac14 - \dfrac{3}{2\log n},
\dfrac14 + \dfrac{2}{\log n} }. 
\label{del-Cond}
\end{align}
For a given $p$ in this range, we first carefully identify a specific integer $\ell \ge 1$, 
and execute the generic algorithm with parameter $\ell$ to search for a MNE that is supported on an $\ell\times\ell$ subgame.
We remark that those $\delta$ excluded in the third condition appearing in \eqref{del-Cond} are covered in Section~\ref{sec:gap}, where we give algorithms for the two gap intervals described in Table~\ref{tab:results}.

\subsection{The choice of $\ell$ }
So our first task is to judiciously select $\ell$.
Observe that the constraints in \eqref{del-Cond} imply
\begin{align}
\label{eq:nd}
(\log n)^{8+\ep_0}
\le n^{\delta}
\le
n^{1/3} e^{-4/3}
\end{align}
for some $\eps_0 = \eps_0(n)>0$ 
that is bounded away from zero.
Let
\begin{align}
\calL(\delta)
:= \bC{
\ell \in \mathbb N: \dfrac{1}{2\ell}
\bR{ 1 + \dfrac{4\ell}{\log n} }
\le
\delta
\le \dfrac{1}{1+\ell}
\bR{ 1 - \dfrac{2\ell}{\log n} }
}.
\label{Ldef}
\end{align}
\begin{tcolorbox}
\begin{lm}\label{le:delta}
For all $\delta$ satisfying \eqref{del-Cond},
the set
$
\calL(\delta)$
is non-empty.
\end{lm}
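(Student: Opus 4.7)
The plan is to set $\eps := 2/\log n$ and rewrite the two inequalities in the definition of $\calL(\delta)$ as $\ell \ge \tfrac{1}{2(\delta-\eps)}$ and $\ell \le \tfrac{1-\delta}{\delta+\eps}$. Equivalently, for each integer $\ell \ge 1$, one has $\ell \in \calL(\delta)$ if and only if $\delta$ lies in the interval $I_\ell := [a_\ell, b_\ell]$, where $a_\ell := \tfrac{1}{2\ell}+\eps$ and $b_\ell := \tfrac{1}{\ell+1} - \tfrac{\ell\eps}{\ell+1}$. The lemma thus reduces to showing that every $\delta$ satisfying \eqref{del-Cond} belongs to at least one $I_\ell$.

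The easy half is a direct computation: $I_2 = [1/4 + \eps,\, 1/3 - 2\eps/3]$. The second condition in \eqref{del-Cond} gives $\delta \le 1/3 - 2\eps/3$, and the third rules out the open interval $(1/4 - 3\eps/4,\, 1/4 + \eps)$. Hence if $\delta > 1/4 - 3\eps/4$, then in fact $\delta \ge 1/4 + \eps$, so $\delta \in I_2$ and $\ell = 2$ works.

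In the remaining range $\delta \le 1/4 - 3\eps/4 = b_3$, I plan to show that the intervals $I_3, I_4, \ldots$ overlap consecutively (that is, $b_{\ell+1} \ge a_\ell$) and together cover the allowed range. A routine calculation reduces the overlap condition $b_{\ell+1} \ge a_\ell$ to $\eps \le \tfrac{\ell - 2}{2\ell(2\ell+3)}$, which behaves like $1/(4\ell)$ for large $\ell$; hence the chain of overlaps from $\ell = 3$ up to some index $\ell_{\max}$ succeeds provided $\ell_{\max} \lesssim 1/(4\eps) = (\log n)/8$.

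The only delicate point---and where the first condition in \eqref{del-Cond} is essential---is to verify that $(\log n)/8$ comfortably dominates the largest index $\ell_{\max}$ actually needed. That first condition forces $\delta > (8 + \eps_0)\log\log n / \log n$ for some $\eps_0 > 0$ and all large $n$, so the largest $\ell$ one would ever use, essentially $\lceil 1/(2(\delta-\eps)) \rceil$ at the smallest allowed $\delta$, is at most roughly $(\log n)/(16 \log\log n) + O(1)$. This lies well below $(\log n)/8$ as soon as $\log\log n$ is bounded away from zero. Consequently every required overlap holds, the chain $I_3 \cup I_4 \cup \cdots \cup I_{\ell_{\max}}$ covers the sub-range $\{\delta \le b_3\}$ of \eqref{del-Cond}, and combining the two cases yields some $\ell \in \calL(\delta)$, completing the proof.
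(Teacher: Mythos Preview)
Your argument is correct. Both you and the paper dispose of the top of the range by checking $\ell=2$ directly, but the remaining case is handled differently: the paper works in ``$\ell$-space,'' rewriting $\calL(\delta) = [L_1, L_2] \cap \mathbb{N}$ and asserting that $L_2 - L_1 > 1$ once $\delta \le 1/5$ (with a separate check that $\ell = 3$ bridges the gap between $1/5$ and the range handled by $\ell=2$), whereas you work in ``$\delta$-space,'' showing that the intervals $I_\ell = [a_\ell, b_\ell]$ overlap consecutively for $\ell = 3, 4, \ldots$ and therefore cover $[\delta_{\min}, b_3]$. These are dual formulations of the same phenomenon; the paper's length estimate $L_2-L_1>1$ is quicker to state, while your covering argument is more explicit about which $\ell$ to pick for a given $\delta$ and makes clearer where the lower-bound condition on $\delta$ in \eqref{del-Cond} is actually used.
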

\end{tcolorbox}
\begin{proof}
Define
\begin{align}
L_1 = \dfrac1{2\delta} \bR{ 1 - \dfrac{2}{\delta \log n} }^{-1}
\quad\mbox{ and }\quad
L_2 = \dfrac{ \log n + 2 }{ \delta \log n + 2} - 1.
\label{L12def}
\end{align}
It then follows that
$
\calL(\delta) 
= 
\bC{
\ell \in \mathbb N: 
L_1 
\le \ell \le L_2
}.
$
We remark, for later use, that $L_2 \le 2/\delta$. 
Let's now verify that $\calL(\delta)$ is non-empty. By definition, if $1/4 + 2/\log n
\le \delta 
\le 
\bR{1
- 4/\log n}/3$,
then $2 \in \calL(\delta)$.
Similarly, if 
$
(1 +12/\log n)/6
\le \delta
\le (1 - 6/\log n)/4
$,
then $3 \in \calL(\delta)$.
For $\delta\le1/5$, we have
$L_2 - L_1 > 1$,
and therefore the interval $(L_1, L_2)$ contains at least one integer.
\end{proof}

%
%
%
%
%

For given $\delta$, in the range specified by \eqref{del-Cond},
we may now choose any $\ell \in \calL(\delta)$ and apply the generic algorithm with parameter $\ell$.
Our main result in this section is to bound the expected running time.

\begin{tcolorbox}
\begin{tm} 
\label{tm:cycle1}
Let $p = n^{-1 + \delta}$, where $\delta$ satisfies \eqref{del-Cond},
and let $\ell \in \calL(\delta)$. Then
the expected running time of the generic algorithm with parameter $\ell$ is $\O{ n^2 }$.
\end{tm}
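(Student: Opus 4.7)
By the running-time decomposition \eqref{time0} the first summand is already $O(n^2)$, so it suffices to show that (i) $\binom{n}{\ell}^2 \poly(n) \cdot \P{\text{no PNE}} = O(n^2)$ and (ii) $\clh^n \cdot \P{\text{Step 2 fails}} = O(n^2)$. Since $\binom{n}{\ell}^2 = O(n^{2\ell})$ with $\ell \le L_2 = O(1/\delta)$, (i) requires faster-than-polynomial decay and (ii) requires exponential decay of the relevant probabilities.

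Bound (i) is immediate from \eqref{pne}: $\P{\text{no PNE}} \le e^{-n^{2\delta}}$, and \eqref{eq:nd} gives $n^{2\delta} \ge (\log n)^{16}$, which dwarfs any polynomial in $n$ (and more than absorbs the $n^{2\ell}$ factor, since $2\ell \le 4/\delta \le O(\log n/\log\log n)$).

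For (ii), a stable $2\ell$-cycle yields an MNE by Lemma~\ref{lm:stableNE}, so it suffices to bound $\P{\text{no stable } 2\ell\text{-cycle in } \D}$. I would apply Lemma~\ref{Pfail} with a suitable $\beta = \beta(n) < 1/\ell$, and handle the two resulting terms in turn. The \emph{unstable-cycles} term is the easier one: for any collection of vertex-disjoint $2\ell$-cycles $C_1, \ldots, C_m$, the stability of $C_i$ is determined only by the arcs between $V(C_i)$ and its exterior, and these arc sets are pairwise disjoint across cycles, so the stability events are (conditionally) independent. A direct computation gives the per-cycle stability probability
\[
\sigma \;=\; \bigl[(1-p)^{\ell-1}\bigl(1+(\ell-1)p\bigr)\bigr]^{2(n-\ell)} \;=\; 1 - O\bigl(n\ell^2 p^2\bigr),
\]
and the upper bound $\ell \le 2/\delta$ from~\eqref{Ldef} combined with~\eqref{del-Cond} yields $n\ell^2 p^2 = O(n^{2\delta-1}/\delta^2) = o(1)$. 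Hence $(1-\sigma)^{\beta n}$ is stretched-exponentially small and easily beats $\clh^{-n}$.

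The main obstacle is the \emph{few-disjoint-cycles} term, for which one must show that $\D$ contains at least $\beta n$ vertex-disjoint $2\ell$-cycles with probability $1 - O(\clh^{-n}/n^2)$. The lower bound $2\ell\delta \ge 1 + 4\ell/\log n$ of~\eqref{Ldef} ensures that the expected number of $2\ell$-cycles in $\D$ is at least $n\, e^{4\ell}/(2\ell)$, providing an exponential slack factor $e^{4\ell}$. Chebyshev's inequality alone would yield only polynomial decay of the failure probability, so the concentration argument will rely on either a Janson-/Suen-type correlation inequality applied to the family of all $2\ell$-cycles, or on a greedy/switching construction of a cycle packing that converts the $e^{4\ell}$ slack into an exponentially small failure probability while absorbing overlap losses. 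Carrying out this step, and verifying that pair-overlap contributions to the variance remain controlled throughout the parameter range~\eqref{del-Cond} (where the exclusion around $\delta = 1/4$ is crucial for avoiding a resonance in the pair-counting), is the technical heart of the proof.
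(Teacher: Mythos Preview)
Your overall strategy---decomposing via Lemma~\ref{Pfail}, handling the ``few disjoint cycles'' term with a Janson-type inequality, and the ``many unstable cycles'' term separately---matches the paper's approach, and your treatment of part~(i) is correct and essentially identical to the paper's bound on $T_0(\ell)$.

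There is, however, a genuine gap in your handling of the unstable-cycles term. Your conditional-independence observation is correct for a \emph{fixed} collection $C_1,\ldots,C_m$ of disjoint $2\ell$-cycles: conditionally on the cycle arcs, the events $\{C_i\text{ unstable}\}$ depend on disjoint arc sets and are independent, each with probability $1-\sigma$. But the second term in Lemma~\ref{Pfail} is $\P{\text{there \emph{exist} }\beta n\text{ disjoint unstable }2\ell\text{-cycles}}$, an existential statement ranging over all collections. The quantity $(1-\sigma)^{\beta n}$ is therefore \emph{not} a valid bound on this event; one must still take a union bound over collections, which contributes a combinatorial factor of order $\bigl(n^{2\ell}p^{2\ell}/(\beta n)\bigr)^{\beta n}$. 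This is exactly what the paper does in Lemma~\ref{le:bonc}, arriving at $\bigl(\tfrac{e\ell}{\beta}\,n^{2\ell}p^{2\ell+2}\bigr)^{\beta n}$. Crucially, making this expression beat $\clh^{-n}$ requires the \emph{upper} constraint $\delta\le\tfrac{1}{1+\ell}\bigl(1-\tfrac{2\ell}{\log n}\bigr)$ from the definition of $\calL(\delta)$, which forces $n^{2\ell}p^{2\ell+2}=n^{2\delta(\ell+1)-2}\le e^{-4\ell}$. Your argument never invokes this constraint, because without the union-bound factor there is nothing that needs absorbing---but that constraint is precisely why $\calL(\delta)$ is defined as it is, and dropping it would let $T_2(\ell)$ blow up. (There is no evident way to sidestep the union bound here: any ``canonical'' collection of $\beta n$ disjoint cycles is itself a function of arcs that overlap with the stability-determining arcs, so the conditioning does not decouple cleanly.)

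A smaller point: the exclusion of $\delta\approx\tfrac14$ in \eqref{del-Cond} is not a ``resonance in the pair-counting'' for Janson's $\Delta$. It is simply that for $\delta$ in that window neither $\ell=2$ (which needs $\delta\ge\tfrac14+\tfrac{2}{\log n}$) nor $\ell=3$ (which needs $\delta\le\tfrac14-\tfrac{3}{2\log n}$) lies in $\calL(\delta)$, so there is no admissible integer $\ell$ balancing the two constraints; the paper handles this window separately via sprinkling in Section~\ref{sec:gap}.
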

\end{tcolorbox}


To analyze the expected running time of the generic algorithm with parameter $\ell$,
we show that, with high probability,
the random bipartite digraph
has many vertex-disjoint $2\ell$-cycles,
and, moreover, 
the probability that 
all disjoint cycles are unstable is small.
Then we conclude that
there exists a stable $2\ell$-cycle that gets found by the generic algorithm with parameter $\ell$
with sufficiently high probability.

\subsection{The probability that all disjoint cycles are unstable}

First, we bound the probability of
having many disjoint unstable $2\ell$-cycles.

\begin{tcolorbox}
\begin{lm}\label{le:bonc} 
For integer $\ell \ge 1$ and $\beta\ell < 1$,
\begin{align*}
\P{ \mathcal{D}_n \text{ has at least $\beta n$ disjoint unstable $2\ell$-cycles} }
\le
\bR{\frac{e\ell}\beta n^{2\ell} p^{2\ell+2} }^{\beta n}.
\end{align*}
\end{lm}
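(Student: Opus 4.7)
The plan is to apply Markov's inequality to the random variable $Y$ counting unordered collections of $\beta n$ pairwise vertex-disjoint unstable $2\ell$-cycles in $\mathcal{D}_n$. If $\mathcal{D}_n$ contains at least $\beta n$ such cycles then $Y\ge 1$, so $\P{Y\ge 1}\le \EXP[Y]$, and the task reduces to upper bounding $\EXP[Y]$. The hypothesis $\beta\ell<1$ is what makes this geometrically possible, since each cycle occupies $\ell$ vertices on each side of the bipartition.

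For the counting piece, I would bound the number of unordered collections of $\beta n$ pairwise vertex-disjoint directed $2\ell$-cycles in the complete bipartite digraph on $(R,C)$ by $(n^{2\ell}/(2\ell))^{\beta n}/(\beta n)!$. The number of directed $2\ell$-cycles in $K_{n,n}$ is at most $n^{2\ell}/(2\ell)$ (pick an alternating cyclic sequence of $\ell$ vertices on each side, and divide by the $2\ell$ cyclic rotations that produce the same cycle), each subsequent cycle has fewer vertices available, and dividing by $(\beta n)!$ removes the overcounting due to ordering.

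The key step is to bound the probability that a fixed collection $\{H_1,\ldots,H_{\beta n}\}$ of disjoint cycles is entirely present and unstable. For a single cycle $H$, the event ``$H$ is present and unstable'' is determined by two kinds of arcs: the $2\ell$ arcs forming $H$, which lie inside $V(H)$, and the arcs from $V(H)$ to vertices outside $V(H)$ (these govern the existence of an external common out-neighbour witnessing instability). Because the $H_i$ are pairwise vertex-disjoint, the cycle arcs of distinct $H_i$ are distinct, and the ``outgoing'' arcs responsible for the instability of $H_i$ and $H_j$ have sources in the disjoint sets $V(H_i)$ and $V(H_j)$ and so are also distinct. Hence the events are controlled by pairwise disjoint sets of Bernoulli arc variables and are therefore mutually independent.

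For an individual cycle $H$, $\P{H\ \text{present}}=p^{2\ell}$, and conditionally on this a union bound over the at most $2n$ candidate external witness vertices (each one needs at least two of the $\ell$ possible arcs from the relevant side of the cycle, contributing probability $\le \binom{\ell}{2}p^2$) yields $\P{H\ \text{unstable}\mid H\ \text{present}}\le n\ell^2 p^2$. Thus each cycle contributes at most $n\ell^2 p^{2\ell+2}$, and assembling everything gives
\[
\EXP[Y] \le \frac{1}{(\beta n)!}\bR{\frac{n^{2\ell}}{2\ell}}^{\beta n}\bR{n\ell^2 p^{2\ell+2}}^{\beta n}.
\]
Applying Stirling's lower bound $(\beta n)!\ge (\beta n/e)^{\beta n}$ and simplifying then produce the claimed estimate $((e\ell/\beta)\,n^{2\ell}p^{2\ell+2})^{\beta n}$. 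The main obstacle I anticipate is the independence step — one must isolate exactly which arcs control each event and check that vertex-disjointness of the cycles translates into arc-disjointness of the relevant edge sets; the remainder is bookkeeping.
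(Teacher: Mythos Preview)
Your approach is correct and matches the paper's: both apply a first-moment/union bound over unordered families of $m=\lceil\beta n\rceil$ vertex-disjoint $2\ell$-cycles, bound the per-cycle probability of being present and unstable by $p^{2\ell}\cdot 2\binom{\ell}{2}np^{2}$, and factor the joint probability using vertex-disjointness, then finish with $m!\ge(m/e)^m$. Your independence justification (every arc governing the event ``$H_i$ present and unstable'' has its source in $V(H_i)$, and these vertex sets are pairwise disjoint) is in fact more explicit than the paper's one-line remark; the only slip is the cycle count $n^{2\ell}/(2\ell)$ --- a directed bipartite $2\ell$-cycle has only $\ell$ rotations that start on a fixed side, so the correct upper bound is $n^{2\ell}/\ell$, and with that correction your computation lands exactly on the stated estimate.
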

\end{tcolorbox}
\begin{proof}
Let $m := \lceil \beta n \rceil$.
Fix two vertex sets in different parts $A \subset R$ and $B \subset C$ with $|A| = |B| = \ell$. 
The union bound gives
\begin{align*}
&\P{ \mathcal{D}_n \text{ has at least $\beta n $ disjoint unstable $2\ell$-cycles} } \\
&\le
\binom{n}{\ell m}^2
\bR{ \dfrac{ (\ell m)! }{ \ell!^m m! } }^2 m!
\,
\P{\D \mbox{ has an unstable cycle on $A\cup B$}}^m.
\end{align*}
This follows by noting that $\binom{n}{\ell m}$ is the number of ways to choose $\ell m$ vertices 
from $R$ and $C$
to form a set of $m$ pairwise vertex-disjoint $2\ell$-cycles.
For each part, 
there are $(\ell m)! / (\ell!^m m!) $ ways
to divide $\ell m$ vertices into $m$ groups of $\ell$ vertices.

Then a $2\ell$-cycle is formed by, from each part of
bipartite graph, choosing one of the $m$ groups first,
and then choosing vertices from $\ell$ vertices in a group.
Next, the number of perfect matchings 
and the number of directed Hamiltonian cycles 
(cycles that visit each vertex exactly once)
of a complete bipartite graph $K_{\ell, \ell}$
equal $\ell!$ and $\ell! (\ell-1)!$, respectively.
Note that the cycles are pairwise vertex-disjoint in $\D$ by construction,
and therefore, the occurrences of arcs on the different cycles are independent.
Recalling that a cycle is unstable if there is a common out-neighbour of some pair of two vertices of the cycle,
by the union bound, we have
\begin{align*}
\P{\D \mbox{ has an unstable cycle on $A\cup B$}}
\le 
2 \bR{ \ell ! (\ell-1)! }^m \binom{\ell}{2} np^{2\ell+2},
\end{align*}
where we choose two vertices from the $\ell$ vertices on one part of $\D$ to share a common out-neighbour,
whose size is upper bounded by $n$.

Summarizing, we have
\begin{align*}
\P{ \mathcal{D}_n \text{ has at least $\beta n $ disjoint unstable $2\ell$-cycles} }
&\le
\dfrac{ n^{2\ell m} }{ m! }
\bR{ \ell np^{2\ell+2} }^m 
\le
\bR{ \dfrac{ e \ell }{m} n^{2\ell+1} p^{2\ell+2} }^m,
\end{align*}
which completes the proof in view of the definition of $m$.
\end{proof}

\subsection{The existence of a partial cycle-factor}

Next, we bound the probability 
that $\D$ contains less than
$n/(2\ell)$
vertex-disjoint $2\ell$-cycles,
which we call a {\it partial cycle-factor} of $\D$
(see, e.g., \cite[Section 4.2]{JLR00}
for more general graph factors).
It is useful to define
\begin{equation}\label{eq:deftau}
\tau := \frac{1}{2 \ell}.
\end{equation}

Then
\begin{align}
&\P{ \mathcal{D}_n \text{ has less than $\tau n$ disjoint $2\ell$-cycles} } \notag\\
&=\P{
\mbox{the largest set of disjoint } 2\ell\mbox{-cycles in } \D
\mbox{ is of size}
< \tau n
} \notag\\
&=\P{
\exists n/2 \mbox{ vertices in both } R, C
\mbox{ with no induced } 2\ell\mbox{-cycles in } \D
} \notag\\
&\le
\binom{n}{ \lceil n/2 \rceil }^2
\P{ Y_{\lceil n/2 \rceil} = 0 },
\label{X=0}
\end{align}
where $Y_{\lceil n/2 \rceil}$ 
counts the $2\ell$-cycles in the random bipartite digraph $\mathcal D_{\lceil n/2 \rceil}$ on $\lceil n/2 \rceil \times \lceil n/2 \rceil$ vertices.
Indeed,
if there are at most $\tau n$ disjoint $2\ell$-cycles, 
then, on each part, at most $\ell \tau n = n/2$ vertices are  used on the disjoint $2\ell$-cycles.
Then
there are more than
$ (1 - \tau \ell)n = n/2$
vertices in both $R$ and $C$ that are not covered by any $2\ell$-cycles.
\begin{tcolorbox}
\begin{cl}\label{cl:X=0}
Assume that $\ell = \o{ \sqrt{np} }$.
Then
\begin{align*}
&\P{ \mathcal{D}_n \text{ has less than $\tau n$ disjoint $2\ell$-cycles} } \notag\\
&\qquad\qquad\le 
\dfrac{4^n}{ n} \Exp{
- \dfrac{ 1 }{6\ell \cdot 2^{2\ell}} \bR{ n p }^{2\ell} }
+ 
\dfrac{4^n}{ n} \Exp{ 
- \dfrac{1}{150\ell^8} n^2 p }
\end{align*}
for sufficiently large $n$.
\end{cl}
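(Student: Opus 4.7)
The plan is to combine the reduction displayed in~\eqref{X=0} with Janson's inequality applied to $Y_m$, the number of directed $2\ell$-cycles in $\mathcal D_m$, where $m := \lceil n/2 \rceil$. First I would use Stirling to bound $\binom{n}{\lceil n/2 \rceil}^2 \le 4^n/n$ (up to an absolute constant absorbed elsewhere), reducing the task to proving
\[
\P{Y_m = 0} \le \Exp{-\frac{(np)^{2\ell}}{6\ell \cdot 2^{2\ell}}} + \Exp{-\frac{n^2 p}{150 \ell^8}}.
\]
Janson's inequality then gives
\[
\P{Y_m = 0} \le \Exp{-\frac{\mu^2}{\mu + \Delta}} \le \Exp{-\mu/2} + \Exp{-\mu^2/(2\Delta)},
\]
where $\mu = \E{Y_m}$, $\Delta = \sum_{A \sim B} \P{A \cup B \subseteq \mathcal D_m}$ is the usual sum over ordered pairs of distinct $2\ell$-cycles sharing at least one arc, and the last inequality uses $\min(x,y) \ge xy/(x+y)$. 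It then remains to dominate each summand by the corresponding target.

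For $\mu$, the number of directed $2\ell$-cycles in the complete bipartite digraph $K_{m,m}$ equals $\binom{m}{\ell}^2 \ell!(\ell-1)! = (m)_\ell^2/\ell$, and each appears in $\mathcal D_m$ independently with probability $p^{2\ell}$, so $\mu = (m)_\ell^2 p^{2\ell}/\ell$. Since the hypothesis $\ell = o(\sqrt{np})$ forces $\ell^2/m \to 0$, one has $(m)_\ell \ge m^\ell(1 - \ell/m)^\ell \ge m^\ell/\sqrt{2}$ for $n$ large, and $m \ge n/2$ then yields $\mu/2 \ge (np)^{2\ell}/(6\ell \cdot 2^{2\ell})$, which accounts for the first exponential.

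For $\Delta$, I would decompose the sum by the intersection pattern of a cycle pair. A pair that shares $k$ arcs shares them as a disjoint union of paths along each cycle on $s \ge k + 1$ vertices; each such pair contributes $p^{4\ell - k}$. Enumerating the patterns (shape of the shared subgraph, its positions on each of the two cycles, and the free completions of each cycle into a full $2\ell$-cycle) yields a bound of the form
\[
\Delta \le \mu^2 \sum_{k \ge 1} \frac{c_k(\ell)}{m^{s_k} p^k}
\]
for explicit polynomial factors $c_k(\ell)$. Under $p = n^{-1+\delta}$ with $\delta > 0$ (so $mp \ge 1$), the minimum of $m^{s_k} p^k$ over $k \ge 1$, $s_k \ge k+1$ is attained at the single-arc overlap $(s, k) = (2, 1)$. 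Hence $\mu^2/\Delta \ge m^2 p/\poly(\ell) \ge n^2 p/(4\,\poly(\ell))$, and after absorbing constants this gives $\mu^2/(2\Delta) \ge n^2 p/(150 \ell^8)$, supplying the second exponential.

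The main obstacle will be the combinatorial bookkeeping in the $\Delta$-estimate: one must enumerate all shapes of a shared union of paths on a $2\ell$-cycle, track how many ways each shape can be positioned and oriented on each of the two cycles, and verify that the single-arc overlap genuinely dominates in the regime $mp \ge 1$. The polynomial tolerance $\ell^8$ in the denominator is generous enough that the computation can afford some slack, but the cancellation of $\ell!$-type terms between $\mu^2$ and the pattern-enumeration in $\Delta$ must be tracked carefully to avoid spurious $(2\ell)!$ losses.
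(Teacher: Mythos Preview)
Your proposal is correct and follows the same overall architecture as the paper: reduce via~\eqref{X=0}, bound the central binomial coefficient by $2^n/\sqrt{n}$, apply Janson's inequality to $Y_{\lceil n/2\rceil}$, lower-bound $\mu$, upper-bound $\Delta$, and split the resulting exponent into the two displayed terms. Where you diverge is in the handling of $\Delta$. The paper does not decompose by shared arcs and path shapes; instead it sums over the numbers $(s,t)$ of shared vertices on each side of the bipartition, uses only the crude observation that $s+t$ shared vertices force at most $s+t-1$ shared arcs (so a factor $p^{4\ell-(s+t-1)}$), and then recognises the resulting double sum as essentially the moment-generating function of a hypergeometric variable evaluated at $1/p$, which it bounds by comparison with a binomial (their Lemma~\ref{hyper}). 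This yields the closed form $\Delta \le 4\ell^6 e^{8\ell/n}(n/2)^{4\ell-2}p^{4\ell-1}$ directly, and a case split on whether $\Delta/\mu$ is below or above $1/2$ then produces the two exponentials one at a time, rather than via your $\min(x,y)\ge xy/(x+y)$ step. Your arc-based path-decomposition is the more standard route in Janson calculations and would also succeed, but extracting the precise $\ell^8$ from it requires honest bookkeeping of the coefficients $c_{s,k}(\ell)$ across all overlap shapes; the paper's hypergeometric device sidesteps exactly that bookkeeping and delivers the polynomial-in-$\ell$ constant with less effort.
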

\end{tcolorbox}
To prove this claim,
we use Janson's inequality \cite[Theorem 2.18]{JLR00} to bound
$\P{ Y_{\lceil n/2 \rceil} = 0 }$ in \eqref{X=0}.
In particular, Janson's inequality implies the following.

\begin{tcolorbox}
\begin{lm}
\label{lm:janson}
Let
$\{ C_i \}_i$ be the set of
all directed $2\ell$-cycles in the complete bipartite digraph 
$K_{\lceil n/2 \rceil, \lceil n/2 \rceil}$
on $\lceil n/2 \rceil \times \lceil n/2 \rceil$ vertices,
and for each $i$, let $X_i$ be the indicator function of event $\{ C_i \subset \mathcal D_{\lceil n/2 \rceil} \}$.
Then
\begin{align*}
\P{ Y_{\lceil n/2 \rceil} = 0 }
=
\P{ \sum_i X_i = 0 }
\le 
\Exp{ 
- \dfrac{\mu^{2}}{\mu + 2\Delta} },
\end{align*}
where
$
\mu
:= \sum_{ i } \E{ X_i }
$, and 
$\Delta 
:=
\sum_{i,j \colon i \sim j } \E{ X_i X_j }$,
where we write $i \sim j$
if $i \ne j$ and two directed cycles $C_i$ and $C_j$ share 
at least one directed edge 
in the edge set $E( K_{\lceil n/2 \rceil, \lceil n/2 \rceil} )$.
\end{lm}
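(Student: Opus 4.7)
The plan is to apply Janson's inequality, in the form of \cite[Theorem~2.18]{JLR00}, directly to the family of indicators $\{X_i\}_i$. First, I would note that $Y_{\lceil n/2 \rceil}$ counts the directed $2\ell$-cycles present in $\mathcal D_{\lceil n/2 \rceil}$, so by construction $Y_{\lceil n/2 \rceil} = \sum_i X_i$, and consequently $\{Y_{\lceil n/2 \rceil} = 0\} = \bigcap_i \{X_i = 0\}$. Thus the claimed bound is an estimate on the probability of the intersection of the complementary events, which is exactly the quantity controlled by Janson's inequality.

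Next, I would verify that the hypotheses of Janson's inequality are satisfied. The random bipartite digraph $\mathcal D_{\lceil n/2 \rceil}$ is obtained by retaining each arc of the complete bipartite digraph $K_{\lceil n/2 \rceil, \lceil n/2 \rceil}$ independently with probability $p$. Each event $\{X_i = 1\} = \{C_i \subset \mathcal D_{\lceil n/2 \rceil}\}$ is therefore an increasing event determined by the presence of a fixed set of $2\ell$ arcs, namely those of the cycle $C_i$. Two indicators $X_i$ and $X_j$ with $i \ne j$ are independent whenever the underlying arc sets are disjoint, i.e.\ whenever $C_i$ and $C_j$ share no directed arc, so the dependency relation $i \sim j$ used to define $\Delta$ captures exactly the non-independent pairs. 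Janson's inequality then yields
\[
\P{Y_{\lceil n/2 \rceil} = 0} = \P{\sum_i X_i = 0} \leq \exp\!\left(-\frac{\mu^{2}}{\mu + 2\Delta}\right),
\]
which is the asserted bound.

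Since this statement is essentially a direct specialization of a named inequality to the concrete setting of $2\ell$-cycle counts in the random bipartite digraph, there is no substantial obstacle within the proof of this lemma itself; the only thing that requires care is the matching of the abstract Janson setting with the concrete one above, in particular checking that the arc-disjointness condition is precisely what decouples two cycle-indicators. The genuine difficulty of this section, namely lower bounding $\mu$ and upper bounding $\Delta$ in terms of $n$, $p$, and $\ell$ so as to derive the explicit exponents appearing in Claim~\ref{cl:X=0}, will arise only when this lemma is applied, not when it is proved.
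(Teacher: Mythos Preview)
Your proposal is correct and matches the paper's approach exactly: the paper simply states that this lemma follows from Janson's inequality \cite[Theorem~2.18]{JLR00} without giving further details, and your verification of the hypotheses (independent arcs, increasing events determined by fixed arc sets, with $i\sim j$ capturing precisely the non-independent pairs) is the natural unpacking of that citation.
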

\end{tcolorbox}

The next lemma is useful for proving Claim \ref{cl:X=0}.
\begin{tcolorbox}
\begin{lm}\label{hyper}
For any $p>0$ and integers $N, \ell$ such that $N > \ell > 0$, we have
\begin{align*}
\sum_{s \in [\ell]}
\binom{\ell}{s}
\binom{N-\ell}{\ell-s} p^{-s}
\le 
\binom{N}{\ell} \bR{ 1 + \dfrac{\ell}{Np} }^\ell
- \binom{N-\ell}{\ell}.
\end{align*}
\end{lm}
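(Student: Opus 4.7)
My plan is to prove the inequality term by term after first rewriting the right-hand side via the Vandermonde identity. Specifically, $\sum_{s=0}^{\ell}\binom{\ell}{s}\binom{N-\ell}{\ell-s}=\binom{N}{\ell}$, so pulling off the $s=0$ term gives $\sum_{s=1}^{\ell}\binom{\ell}{s}\binom{N-\ell}{\ell-s}=\binom{N}{\ell}-\binom{N-\ell}{\ell}\ge 0$. Since $\binom{N-\ell}{\ell}\le \binom{N}{\ell}$, the stated right-hand side satisfies
\begin{align*}
\binom{N}{\ell}\bR{1+\tfrac{\ell}{Np}}^{\ell}-\binom{N-\ell}{\ell}
\ge
\binom{N}{\ell}\bS{\bR{1+\tfrac{\ell}{Np}}^{\ell}-1}
= \binom{N}{\ell}\sum_{s=1}^{\ell}\binom{\ell}{s}\bR{\tfrac{\ell}{Np}}^{s}.
\end{align*}
Hence it suffices to establish the stronger, term-by-term bound
$\binom{\ell}{s}\binom{N-\ell}{\ell-s}p^{-s}\le \binom{N}{\ell}\binom{\ell}{s}(\ell/(Np))^{s}$ for every $s\in[\ell]$, which after cancelling the common $p^{-s}$ and $\binom{\ell}{s}$ reduces to
\begin{align*}
\binom{N-\ell}{\ell-s}\le \binom{N}{\ell}\bR{\tfrac{\ell}{N}}^{s}.
\end{align*}

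To prove this ratio inequality, I will rewrite both sides using falling factorials $x^{(k)}=x(x-1)\cdots(x-k+1)$. A direct computation gives
\begin{align*}
\frac{\binom{N-\ell}{\ell-s}}{\binom{N}{\ell}}
= \frac{\ell^{(s)}\,(N-\ell)^{(\ell-s)}}{N^{(\ell)}}.
\end{align*}
Factoring $N^{(\ell)}=N^{(s)}\cdot (N-s)^{(\ell-s)}$ and observing that $(N-\ell)^{(\ell-s)}\le (N-s)^{(\ell-s)}$ (since $s\le \ell$ makes the factors on the left no larger), the ratio is bounded by $\ell^{(s)}/N^{(s)}=\prod_{i=0}^{s-1}(\ell-i)/(N-i)$. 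For each $0\le i\le s-1$, the inequality $(\ell-i)/(N-i)\le \ell/N$ is equivalent to $i(N-\ell)\ge 0$, which holds because $\ell\le N$. Taking the product over $i$ yields $\ell^{(s)}/N^{(s)}\le (\ell/N)^{s}$, which is exactly the required ratio inequality.

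Summing the term-by-term bound over $s=1,\ldots,\ell$ and applying the binomial theorem gives
$\sum_{s=1}^{\ell}\binom{\ell}{s}\binom{N-\ell}{\ell-s}p^{-s}\le \binom{N}{\ell}\sum_{s=1}^{\ell}\binom{\ell}{s}(\ell/(Np))^{s}$, which by the opening calculation is at most $\binom{N}{\ell}(1+\ell/(Np))^{\ell}-\binom{N-\ell}{\ell}$, completing the proof. The argument is entirely elementary; no single step stands out as a real obstacle, but the one place where care is required is the reduction to the term-by-term comparison, since it is slightly sharper than the bound in the statement and hides the $\binom{N-\ell}{\ell}$ correction coming from Vandermonde's identity.
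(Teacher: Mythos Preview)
Your proof is correct and takes a genuinely different route from the paper. The paper interprets the left-hand side as $\binom{N}{\ell}\,\E{p^{-Z}\I{Z\ge 1}}$ for a hypergeometric random variable $Z$ with parameters $(N,\ell,\ell)$, and then invokes the known fact (citing \cite{JLR00}) that the moment-generating function of a hypergeometric is dominated by that of a binomial with the same mean; subtracting off the $Z=0$ contribution $\binom{N-\ell}{\ell}/\binom{N}{\ell}$ gives the stated bound directly. Your argument instead throws away the $-\binom{N-\ell}{\ell}$ on the right via $\binom{N-\ell}{\ell}\le\binom{N}{\ell}$ and then proves the sharper term-by-term inequality $\binom{N-\ell}{\ell-s}\le \binom{N}{\ell}(\ell/N)^s$ by an elementary falling-factorial manipulation. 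Your approach is fully self-contained and avoids the external MGF comparison, at the cost of proving a slightly stronger pointwise estimate than is needed; the paper's approach is shorter once the cited lemma is granted and keeps the $\binom{N-\ell}{\ell}$ correction exact rather than absorbing it. One small remark: the Vandermonde identity you open with is not actually used in your argument---the reduction relies only on $\binom{N-\ell}{\ell}\le\binom{N}{\ell}$---so you could drop that sentence without loss.
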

\end{tcolorbox}
\begin{proof}
Let $Z$ be a hypergeometric random variable with parameters $N, \ell, \ell$.
Then
\begin{align}\label{eq:expl}
\sum_{s \in [\ell]}
\binom{\ell}{s}
\binom{N-\ell}{\ell-s} p^{-s}
= 
\binom{N}{\ell}
\sum_{s \in [\ell]}
\dfrac{ \binom{\ell}{s}
\binom{N-\ell}{\ell-s} }
{ \binom{N}{\ell} } p^{-s} 
= \binom{N}{\ell} \E{ p^{-Z} \I{ Z \ge 1 }}.
\end{align}
Since 
the moment-generating function of a hypergeometric distribution 
is not greater than that of the binomial distribution with the same mean (see \cite[Theorem 2.10]{JLR00}),
we have
\begin{align*}
\E{ p^{-Z} }
\le 
\bR{ 1 + \dfrac{\ell}{N} \bR{\dfrac1p-1} }^\ell
\le \bR{ 1 + \dfrac{\ell}{Np} }^\ell.
\end{align*}

By noting
$$
\E{ p^{-Z} } = \E{ p^{-Z} \I{ Z \ge 1 }}
+ \dfrac{
\binom{N-\ell}{\ell} }
{ \binom{N}{\ell} }
$$
and using \eqref{eq:expl}, we obtain the result.
\end{proof}

Now we are ready to prove Claim \ref{cl:X=0}. \\
\begin{proof}[Proof of Claim \ref{cl:X=0}]
We use $[n]_{j}$ to denote the falling factorial $n (n-1) \cdots (n-j+1)$. Recall that $Y_{\lceil n/2 \rceil}$ counts the $2\ell$-cycles in the random bipartite digraph on $\lceil n/2 \rceil \times \lceil n/2 \rceil$ vertices
and $\mu = \E{ Y_{\lceil n/2 \rceil} }$.
Then
\begin{align*}
\mu
= \dfrac{ \bR{\ell!}^{2}}{2\ell} \binom{\lceil n/2\rceil}{\ell}^{2} p^{2\ell}
= \dfrac{ [\lceil n/2\rceil]_{\ell}^{2} }{2\ell} p^{2\ell},
\end{align*}
where the factor $\ell!$ is used to choose the ordering of the vertices on one part,
and $2\ell$ is due to the fact that every vertex on a cycle can be the starting vertex for counting. 
For $\ell = \o{ \sqrt n }$, 
we have
\begin{align}
\mu
\ge 
\dfrac{ 1 }{2\ell} ( p(n/2 - \ell) )^{2\ell}
&= \dfrac{ 1 }{2\ell} 
\bR{ 1 - \dfrac{\ell}{n/2} }^{2\ell}
( np/2 )^{2\ell} \notag\\
&= (1+o(1)) \dfrac{ 1 }{2\ell} 
( np/2 )^{2\ell} 
\ge \dfrac{ 1 }{3\ell} 
( np/2 )^{2\ell},
\label{mu}
\end{align}
for all sufficiently large $n$.

Fix $\i$ (resp. $\j$) to be a set of $\ell$ vertices in $R$ (resp. $C$).
Define $X_{\i, \j}$ to be the indicator function of the event that all the vertices in $\{\i, \j\}$ are part of the same $2\ell$ cycle. 
Denote by $B = \binom{[ \lceil n/2 \rceil ]}{\ell}$ the set of all size-$\ell$ subsets of $[ \lceil n/2 \rceil ]$. For $(\i, \j), (\i', \j') \in B^2$, we use $(\i, \j) \sim (\i', \j')$ to denote $(\i, \j) \neq (\i', \j')$ and $|\i \cap \i'|\vee |\j \cap \j'| \ge 1$,
where $a \vee b$ denotes the maximum of $a$ and $b$.
Then we have
\begin{align*}
\Delta 
= 
\sum_{
(\i, \j), (\i', \j') \in B^2:
(\i, \j) \sim (\i', \j')} 
\E{ X_{\i, \j} X_{\i', \j'}}.
\end{align*}
Using again the fact that the number of directed Hamiltonian cycles for the
complete bipartite graph $K_{\ell, \ell}$ is $\ell! (\ell-1)!$,
we have
\begin{align*}
\Delta 
&\le
\sum_{s \in [\ell]}
\sum_{t \in [\ell]} 
\binom{\lceil n/2 \rceil}{\ell}^{2}
\binom{\ell}{s}
\binom{\lceil n/2 \rceil-\ell}{\ell-s}
\binom{\ell}{t}
\binom{\lceil n/2 \rceil-\ell}{\ell-t}
\bR{ \ell! (\ell-1)! }^{2} 
p^{4\ell- (s+t-1)} \\
&=
\binom{\lceil n/2 \rceil}{\ell}^{2}
\bR{ \ell! (\ell-1)! }^{2} p^{4\ell+1}
\bR{ \sum_{s \in [\ell]}
\binom{\ell}{s}
\binom{\lceil n/2 \rceil-\ell}{\ell-s} p^{-s} }^2,
\end{align*}
where the first inequality
follows by noting that
for two directed cycles sharing $s+t$ vertices,
the number of shared arcs is at most $s+t-1$.

Using Lemma \ref{hyper} with $N= \lceil n/2 \rceil$, and noting that
\begin{align*}
\binom{x-\ell}{\ell}
= \dfrac{[x-\ell]_\ell}{\ell!}
\ge \dfrac{\bR{x-2\ell}^\ell}{\ell!}
= \dfrac{x^\ell}{\ell!} \bR{1-\dfrac{2\ell}{x}}^\ell
\end{align*}
for all $x > \ell \ge 1$,
we have
\begin{align*}
\Delta 
&\le 
\binom{\lceil n/2 \rceil}{\ell}^{2}
\bR{ \ell! (\ell-1)! }^{2} p^{4\ell+1}
\bR{ \binom{\lceil n/2 \rceil}{\ell} \bR{ 1 + \dfrac{\ell}{\lceil n/2 \rceil p} }^\ell
- \binom{\lceil n/2 \rceil-\ell}{\ell} }^2 \notag \\
&\le 
\lceil n/2 \rceil^{4\ell}
p^{4\ell+1}
\bR{ \bR{ 1 + \dfrac{\ell}{np/2} }^\ell
- \bR{ 1- \dfrac{2\ell}{n/2}}^{\ell} }^2 \notag.
\end{align*}

Using ${\ell \choose i} \le \ell^i$, combined with the binomial theorem, we have 
\begin{align}
\bR{ 1 + \dfrac{\ell}{np/2} }^\ell
- \bR{ 1- \dfrac{2\ell}{n/2}}^{\ell}
&= 
\sum_{i = 0}^\ell {\ell \choose i}\bR{ \dfrac{\ell}{np/2} }^i
- \sum_{i = 0}^\ell {\ell \choose i} \bR{ \dfrac{-2\ell}{n/2} }^i \notag\\
&\le 
\sum_{i = 1}^\ell \bR{ \dfrac{\ell^2}{np/2} }^i
+ \sum_{i = 1}^\ell \bR{ \dfrac{2\ell^2}{n/2} }^i. \notag
\end{align}
Hence, 
by noting that $\lceil n/2 \rceil^{4\ell} / ( n/2 )^{4\ell}
\le \bR{1+ 2/n}^{4\ell}
\le e^{8\ell/n}
$, we have 
\begin{align}
\Delta 
&\le 
\bR{ n/2}^{4\ell}
e^{8\ell/n}
p^{4\ell+1}
\bR{ \sum_{i = 1}^\ell \bR{ \dfrac{\ell^2}{np/2} }^i
+ \sum_{i = 1}^\ell \bR{ \dfrac{2\ell^2}{n/2} }^i }^2 \notag\\
&\le 
e^{8\ell/n}
\bR{ n/2}^{4\ell}
p^{4\ell+1}
\bR{ \dfrac{2\ell^3}{np/2} }^2
= 4\ell^6 
e^{8\ell/n}
\bR{ n/2}^{4\ell-2}
p^{4\ell-1},
\label{delta}
\end{align}
where the second inequality holds for all $n$ large enough, and by noting $\ell = \o{ \sqrt {np} }$.

Let
$\newconstant{co:co5}$
\begin{align*}
\useconstant{co:co5} := 12\ell^7
e^{8\ell/n}
(n/2 )^{2\ell-2} p^{2\ell-1}.
\end{align*}
Using \eqref{mu} and \eqref{delta}, we have
$
\Delta/\mu
\le \useconstant{co:co5}.
$
Next, we distinguish two cases.\\
{\bf Case 1}: If $\useconstant{co:co5}\le1/2$ then, by Lemma~\ref{lm:janson} and \eqref{mu},
\begin{align*}
\P{ Y_{\lceil n/2 \rceil} = 0 }
\le \Exp{ 
- \dfrac{\mu}{1 + 2\Delta / \mu} }
\le \Exp{ 
- \dfrac{\mu}{1 + 2\useconstant{co:co5}} }
\le \Exp{- \dfrac{\mu}{2} }
\le \Exp{
- \dfrac{ 1 }{6\ell} 
( np/2 )^{2\ell} }.
\end{align*}
{\bf Case 2}: If $\useconstant{co:co5} > 1/2$ then, by Lemma~\ref{lm:janson} and \eqref{mu},
\begin{align*}
\P{ Y_{\lceil n/2 \rceil} = 0 }
\le \Exp{ 
- \dfrac{\mu}{1 + 2\Delta / \mu} }
&\le \Exp{ 
- \dfrac{\mu}{4\useconstant{co:co5}} }\\
&\le \Exp{ 
- \dfrac{1}{36 \ell^8}
( n/2 )^2 p e^{-8\ell/n} } 
\le \Exp{ 
- \dfrac{1}{150\ell^{8}}
n^{2} p },
\end{align*}
by recalling that $\ell=o(n)$ as $n\to\infty$.
Now, for all integers $n\ge1$, note that
\begin{align*}
\binom{n}{ \lceil n/2 \rceil }
\le
\dfrac{2^{n+1}}{\sqrt{ \pi (2n+1) }}
\le 
\dfrac{2^n}{\sqrt n}.
\end{align*}
Combining the above, the proof is then complete in view of
\eqref{X=0}.
\end{proof}

\subsection{An efficient algorithm for Regime 1}

%
%

Now we are ready to prove Theorem \ref{tm:cycle1}, that our generic algorithm gives an efficient algorithm for Regime 1.


\begin{proof}[Proof of Theorem \ref{tm:cycle1}]
First, note that the algorithm is guaranteed to find a Nash equilibrium.
Recall the Lemke-Howson constant $\clh \le 2.598$ in \eqref{eq:clh} and 
$\tau = (2 \ell)^{-1}$ defined by \eqref{eq:deftau}.
Bounding the probability of not having a pure Nash equilibrium by \eqref{pne},
we have that the expected running time of the generic algorithm with parameter $\ell$ is at most
\begin{align}
& O(n^2) 
+ \O{\binom{n}{\ell}^2
\poly(n) } \P{ \D \mbox{ has no PNE} }
+ \O{ \clh^n } \P{ \mathcal{D}_n \text{ has no 
stable $2\ell$-cycle} } \notag\\
&= O(n^2) 
+ \O{ T_0(\ell) } + \O{ T_1(\ell) } + \O{ T_2(\ell) },
\label{time1}
\end{align}
where, using Lemma \ref{Pfail} and Claim \ref{cl:X=0},
$$
T_0 ( \ell ) := e^{-n^2 p^2} \binom{n}{\ell}^2 \poly(n), \qquad 
T_1( \ell ) := \clh^n \P{\mathcal{A}_\ell^c}, \qquad 
T_2( \ell ) := \clh^n \P{ \mathcal{C}_{\ell}},
$$
and $\mathcal{A}^c_\ell$ is the event that $\mathcal{D}_n$ does contain at least $\tau n$ disjoint $2\ell$-cycles,
and $\mathcal{C}_{\ell}$ is the event that $\mathcal{D}_n$ contains at least $\tau n $ disjoint unstable $2\ell$ cycles.

We emphasize the need to search for PNE in the first step of the algorithm. This is because $\ell$ could grow in $n$, 
in view of the definition $\calL(\delta)$ in \eqref{Ldef},
for example in the case when $\delta = (8+\ep_0)\log\log n / \log n$.

$\bullet$ We first analyze $T_0(\ell)$. 
Note that $np = n^{\delta}$ and $\ell \le L_2 \le 2/\delta$ (see \eqref{L12def}). 
Then
\begin{align*}
T_0(\ell)
= \O{ \Exp{ - n^{2\delta} + \O{ \ell \log n } } }.
\end{align*}
In view of \eqref{eq:nd},
we have $n^{2\delta}\ge(\log n)^{16+2\ep_0}$,
and therefore,
\begin{align*}
\ell \log n
\le 
\dfrac{2}{\delta}\log n
\le 
\dfrac{ (\log n)^2 }{2 \log\log n}
=
o( n^{2\delta}).
\end{align*}

$\bullet$ Next, we focus on $T_2(\ell)$.
By Lemma \ref{le:bonc}, 
we have, by noting $\tau \ell =1/2$, that
\begin{align*}
T_2(\ell) 
= \O{ \clh^n \bR{ \dfrac{e \ell}{\tau } n^{2(\delta(1+\ell)-1)} }^{\tau n} }
= \O{ \clh^n \bR{ 2 e \ell^2 n^{2(\delta(1+\ell)-1)} }^{n/2\ell}}.
\end{align*}
We use $
\delta
\le \frac{1}{1+\ell} \bR{ 1 - \frac{2\ell}{\log n} }
$ to obtain that
\begin{align*}
\bR{ 2 e \ell^2 n^{2(\delta(1+\ell)-1)} }^{1/2\ell}
&= 
\Exp{ - \dfrac{1}{\ell} 
\bR{ (1 - \delta(1+\ell)) \log n 
- \dfrac12 (1+\log2) - \log \ell } } \\
&\le \Exp{ - \dfrac{1}{\ell} 
\bR{ 2\ell
- \dfrac12 (1+\log2) - \log\ell } }
\le \Exp{-1} < \frac 1{\clh},
\end{align*}
when $\ell \ge 2$, which implies $T_2(\ell)=o(1)$.

$\bullet$ Now we bound $T_1( \ell )$ using Claim \ref{cl:X=0}.
First, note that
\begin{align*}
\ell 
= \O{ \dfrac{\log n}{\log\log n} }
= \o{ n^{\delta/2} }
= \o{ \sqrt {np} },
\end{align*}
in view of the bound \eqref{eq:nd} on $n^\delta = np$.
Then, using Claim \ref{cl:X=0},
\begin{align*}
T_1( \ell ) 
&\le 
\dfrac{1}{ n} \bR{4\clh}^n
\Exp{ 
- \dfrac{ 1 }{6 \ell } 
\bR{ n p/2 }^{2 \ell } } 
+ \dfrac{1}{ n} \bR{4\clh}^n
\Exp{ 
- \dfrac{1}{150 \ell^8} n^2 p }
\\
&=
\Exp{ 
- n f_1( \ell )}
+ \Exp{ 
- n f_2( \ell )},
\end{align*}
where we can rewrite, using $p = n^{\delta -1}$, 
\begin{align*}
f_1( \ell )
&= \dfrac{ 1 }{6 \ell } 
\bR{ p/2 }^{2 \ell }
n^{2 \ell -1} - \log4\clh + \dfrac{\log n}{n} =
\dfrac{ 1 }{6 \ell } 
2^{-2 \ell }
n^{ 2\delta \ell - 1} - \log4\clh + \dfrac{\log n}{n},
\end{align*}
and
\begin{align*}
f_2( \ell )
&= \dfrac{1}{150 \ell^8} n p - \log4\clh + \dfrac{\log n}{n}.
\end{align*}
We first prove that $f_2( \ell ) >0$, for all $n$ large enough. Recall 
$\delta \ge 8 \log\log n / \log n.$
As $\ell \le L_2\le 2/\delta$, where $L_2$ is defined by \eqref{L12def}, we have
\begin{align*}
\dfrac{1}{150 \ell^8} n p
&\ge
\dfrac{ 1 }{150 \cdot 2^8 } \delta^8 n^{\delta} 
=
\Omega( (\log\log n)^{8+\ep_0} )
> \log4\clh.
\end{align*}

Note that
\begin{align*}
\dfrac{ 1 }{6 \ell } 
2^{-2 \ell }
n^{ 2\delta \ell - 1}
= \Exp{ 
- 2 \ell \log 2 - \log(6 \ell )
+ (2\delta \ell - 1) \log n
},
\end{align*}
and we show that $f_1( \ell )>0$ by
proving
\begin{align}
2\delta \ell - 1 
> \dfrac{ 2 \ell \log 2 + \log(6 \ell ) }{ \log n }
+ \dfrac{1}{\log n} \Log{ \log4\clh - \dfrac{\log n}{n} }.
\label{Lmin}
\end{align}
This follows by noting, from \eqref{Ldef}, that
$
2\delta \ell > 1 + 4 \ell /\log n,
$
which implies \eqref{Lmin} for all $\ell\ge2$.
The proof is complete 
in view of the bound on the running time \eqref{time1}.
\end{proof}

\section{Regime 2: disjoint stable 4-cycles with localized pair}
\label{sec:reg2}
In this section, 
we study the case
\begin{align}\label{local-p}
2^9 n^{-2/3} \le p \le  e^{-52} 2^{-8}  n^{-1/2}.
\end{align}
In this range, we use the generic algorithm with parameter $\ell=2$.
To show that this algorithm works, we need to prove that a
stable $4$-cycle exists with sufficiently high probability.
To this end,
we search for disjoint $4$-cycles with an additional local constraint
to guarantee that the probability of having disjoint unstable cycles is small.
In order to prove the existence of an algorithm that finds a Nash equilibrium in expected polynomial time, 
we show that $\D$ contains a stable $4$-cycle with sufficiently high probability.
This leads to our main result in this section.
\begin{tcolorbox}
\begin{tm}\label{thm-local}
For $2^9 n^{-2/3} \le p \le  e^{-52} 2^{-8}  n^{-1/2}$, the expected running time of the generic algorithm with parameter $\ell=2$
is $\O{n^2}$.
\end{tm}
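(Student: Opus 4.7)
The plan is to apply the generic algorithm with parameter $\ell=2$ and to bound the three terms in the running-time expression \eqref{time0}. The first term is $O(n^2)$. For $p\ge 2^9n^{-2/3}$, we have $p^2n^2\ge 2^{18}n^{2/3}$, so \eqref{pne} yields $\P{\text{no PNE}}\le e^{-2^{18}n^{2/3}}$, which dwarfs the polynomial prefactor $\binom{n}{2}^2\poly(n)$ and makes the second term $o(1)$. The whole task thus reduces to showing
$$
\P{\mathcal D_n\text{ has no stable 4-cycle}}\le\clh^{-n}
$$
uniformly over the range \eqref{local-p}.

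A direct use of Lemma~\ref{Pfail} and Lemma~\ref{le:bonc} with $\ell=2$ fails: the bound $(2e/\beta\cdot n^4p^6)^{\beta n}$ on the probability of many disjoint unstable $4$-cycles behaves like $(\Theta(n))^{\beta n}$ near the upper endpoint $p\asymp n^{-1/2}$, which is far larger than $\clh^{-n}$. To fix this I will restrict attention to $4$-cycles containing a \emph{localized pair}, meaning that the cycle's two row vertices lie within a window of size $w=w(p,n)$ in a pre-fixed partition of $[n]$ (an analogous localization may be imposed on the column vertices if needed). Call such a cycle a \emph{localized 4-cycle}; any stable localized $4$-cycle still yields a MNE by Lemma~\ref{lm:stableNE}. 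I will then prove a localized version of Lemma~\ref{Pfail}:
\begin{align*}
\P{\text{no stable localized 4-cycle}} &\le \P{<\beta n\text{ disjoint localized 4-cycles}}\\
&\qquad+\P{\ge\beta n\text{ disjoint unstable localized 4-cycles}}.
\end{align*}
The second term is bounded by reworking the enumeration in the proof of Lemma~\ref{le:bonc}: each localized $4$-cycle has only $\Theta(nw)$ choices for its row pair instead of $\Theta(n^2)$, so the factor $n^{2\ell}=n^4$ shrinks to $\Theta(n^3 w)$ (or $\Theta(n^2 w^2)$ if both pairs are localized), and the resulting product bound can be driven below $\clh^{-n}/2$ by choosing $w$ appropriately. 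The first term is bounded via Janson's inequality (Lemma~\ref{lm:janson}) applied to the family of localized $4$-cycle indicators in $K_{\lceil n/2\rceil,\lceil n/2\rceil}$, following the template of Claim~\ref{cl:X=0}: the mean $\mu=\Theta(n^3 w p^4)$ grows polynomially in $n$ and the dependency sum $\Delta$ is controlled using the block structure, which kills most of the cross-terms since cycles in different blocks share no row or column vertex.

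The main obstacle is the joint calibration of the window size $w$ and the packing threshold $\beta$ as functions of $p$. The window $w$ must be large enough that $\mu=\omega(n)$ and the ratio $\Delta/\mu$ stays bounded (so Janson's yields an $\exp(-\Omega(n))$ bound on the first term), yet small enough that the unstable-count exponential in the second term is below $\clh^{-n}/2$. The tightest constraints occur at the endpoints of \eqref{local-p}: Janson's bound binds near $p\approx 2^9n^{-2/3}$, where $\mu$ is smallest, while the unstable enumeration binds near $p\approx e^{-52}2^{-8}n^{-1/2}$, where $np^2$ is largest and typical cycles are most likely to be destabilized. It is precisely this balancing across the whole regime that forces the explicit constants $2^9$ and $e^{-52}2^{-8}$ appearing in the statement.
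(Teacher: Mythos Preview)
Your overall plan---localize the row pair of each $4$-cycle to a window of width $w$, then split into (i) existence of $\beta n$ disjoint localized $4$-cycles and (ii) impossibility of $\beta n$ disjoint \emph{unstable} localized $4$-cycles---is the same decomposition the paper uses. But both halves are executed quite differently there. For (i) the paper does \emph{not} adapt Janson/Claim~\ref{cl:X=0}; instead it runs an explicit sequential exploration (the \emph{pairing procedure}, Claim~\ref{pairing}): it processes row vertices $r_i$ in order, exposes only the first $\lceil 4np\rceil$ in- and out-arcs at $r_i$, and tries to close a $4$-cycle with some $r_j$ at distance $\le d(n,p)=2^{20}/(n^2p^4)$. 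A conditional Chernoff argument shows each attempt succeeds with probability $>0.999$, so a binomial tail bound gives $\ge n/100$ disjoint localized $4$-cycles (or a PNE) with probability $1-e^{-2n}$. Your Janson route is viable, but it needs a step you omit: in the reduction ``fewer than $\beta n$ disjoint $\Rightarrow$ a large cycle-free subset'', the subset's intersection with the row-blocks is arbitrary, so $\mu$ and $\Delta$ vary over subsets and you must establish \emph{uniform} bounds (e.g., $\sum_B w_B^2\ge |S_R|^2 w/n$ via Cauchy--Schwarz, and $\Delta/\mu\le w\,np^3$ since $w_B\le w$). Also, your claim that ``cycles in different blocks share no row or column vertex'' is false---they may share column vertices---though it is true that they share no \emph{arc}, which is all Janson's $\Delta$ requires. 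For (ii) the paper does not rework Lemma~\ref{le:bonc} either: it bounds the probability that a localized $R$-pair is \emph{bad}, meaning it has at least \emph{three} common out-neighbours, via $\P{\mathrm{Bin}(n,p^2)\ge 3}\le(enp^2/3)^3$; the cube $(np^2)^3$ rather than the single factor $np^2$ arising from your direct adaptation is what drives the bound~\eqref{PallBad} and ultimately produces the specific constant $e^{-52}2^{-8}$ at the upper endpoint---so that constant is an artefact of the paper's bad-pair device, not of the Janson/enumeration balancing you describe.
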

\end{tcolorbox}

First, recall that the two parts of the bipartite graph $\D$ are
$R = (r_1, \ldots, r_n)$ 
and $C=(c_1, \ldots, c_n)$.
For any two distinct vertices $r_i, r_j \in R$,
we say that the pair $(r_i , r_j)$ is {\it bad} if there are at least three vertices in $C$
that are the common out-neighbours of $r_i $ and $r_j$. 
Note that if a pair is bad, it cannot be the vertices of a stable cycle
by Definition \ref{def:unst}.
It is useful to recall the following Chernoff bound of the binomial tail, see, e.g., \cite[Corollary 2.4]{JLR00}.
\begin{lm}
\label{tail}
For any $t>np$, we have
\begin{align*}
\mathbf P( {\rm Bin}(n, p) \ge t )
\le e^{-np}
\bR{\frac{enp}{t}}^t.
\end{align*}
\end{lm}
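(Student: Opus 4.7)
The plan is to prove Lemma \ref{tail} by the classical Chernoff moment-generating-function method, optimizing a free parameter according to the value of $t$. First I would write $X\sim\bin(n,p)$ as a sum $X=\sum_{i=1}^n X_i$ of i.i.d.\ Bernoulli($p$) variables and, for an arbitrary $\lambda>0$, apply Markov's inequality to the non-negative random variable $e^{\lambda X}$ to obtain $\mathbf P(X\ge t)\le e^{-\lambda t}\,\mathbf E[e^{\lambda X}]$. This is the usual starting point for exponential tail bounds on sums of independent variables.

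Next I would compute the moment generating function explicitly: by independence, $\mathbf E[e^{\lambda X}]=(1-p+pe^{\lambda})^n=(1+p(e^{\lambda}-1))^n$, and then use the elementary inequality $1+x\le e^x$ (with $x=p(e^{\lambda}-1)\ge 0$) to bound this by $\exp\!\bigl(np(e^{\lambda}-1)\bigr)$. Combining the two steps produces the one-parameter bound
\[
\mathbf P(X\ge t)\;\le\;\exp\!\bigl(np(e^{\lambda}-1)-\lambda t\bigr),
\]
valid for every $\lambda>0$.

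The final step is to optimize the exponent over $\lambda$. Differentiating with respect to $\lambda$ and setting the derivative to zero gives the stationary equation $npe^{\lambda}=t$, hence the minimizer $\lambda^{\star}=\log\!\bigl(t/(np)\bigr)$; this choice is strictly positive precisely because of the hypothesis $t>np$, so the Markov step above was legitimate. Substituting $\lambda^{\star}$ back into the exponent collapses it to $t-np-t\log\!\bigl(t/(np)\bigr)$, and exponentiating yields
\[
\mathbf P(X\ge t)\;\le\;e^{-np}\cdot e^{t}\cdot\bigl(np/t\bigr)^{t}\;=\;e^{-np}\bigl(enp/t\bigr)^{t},
\]
which is exactly the claimed inequality.

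I do not expect any genuine obstacle here, as this is a textbook Chernoff calculation. The only point that requires any vigilance is verifying that the optimizer $\lambda^{\star}$ is positive, and this is guaranteed by the single hypothesis $t>np$ imposed in the statement. An alternative would be simply to quote \cite[Corollary~2.4]{JLR00} directly, as the paper already does, but the self-contained derivation above is only a few lines and makes the dependence on $t>np$ transparent.
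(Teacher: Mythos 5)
Your proof is correct: the Markov/MGF step, the bound $\mathbf E[e^{\lambda X}]\le\exp(np(e^{\lambda}-1))$, and the optimization at $\lambda^{\star}=\log(t/(np))$ (which is positive exactly because $t>np$) combine to give precisely $e^{-np}(enp/t)^{t}$. The paper does not prove this lemma at all — it simply cites Corollary~2.4 of Janson, {\L}uczak and Ruci\'nski — and your self-contained derivation is the standard proof of that cited result, so there is nothing to reconcile.
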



Using Lemma \ref{tail}, and noting $np^2 \le 1$, we have
\begin{align}
\P{ (r_i , r_j) \text{ is bad} }
\le \P{ {\rm Bin}(n, p^2) \ge 3 } 
\le e^{-np^2} \bR{ \dfrac{enp^2}{3} }^{3}.
\label{badprob}
\end{align}
The locality constraint we impose is on the distance between the indices of a pair of distinct vertices in $R$,
more specifically, we focus on pairs $(r_i , r_j)$ such that
\newconstant{para}
\begin{align}
\abs{ j - i } \le 
2^{20} \useconstant{para} 
=: d(n, p),
\quad\mbox{ where }
\useconstant{para}
= \useconstant{para}(n, p)
:=  1/(n^2p^4).
\label{k2def}
\end{align}
We call these {\it localized pairs}.
The probability of having 
a set of disjoint $d(n, p)$-localized pairs 
that are all unstable, 
where the number of cycles is at least $t =n/100$, is
\begin{align}
&\P{ 
\D \text{ has at least $n/100$ disjoint unstable $d(n, p)$-localized pairs}  } \notag\\
&\le
\binom{n}{t} 
\left(d(n, p) \right)^t
\P{ (r_i , r_j) \text{ is bad} }^t \notag\\
&\le \bR{ \frac{en}{t} }^t
\left( \frac{2^{20}}{n^2 p^4} \right)^t
\bR{ \dfrac{enp^2}{3} }^{3t} 
= \left( \frac{ e^4 2^{20} n^2 p^2 }{ 27t } \right)^t \le e^{-n}.
\label{PallBad}
\end{align}
Here the first inequality follows because we may pick the first vertex of $R$ in the cycle in 
$\binom{n}{t}$ ways and then the second vertex of $R$ in the cycle must be within distance $d(n, p)$ from it.
We used \eqref{badprob} in the second inequality,
and in view of the range of $p$ in \eqref{local-p}.

\subsection{The pairing procedure}

We will show that, with high probability, $\D$ has either at least $n/100$ disjoint  $4$-cycles or a $2$-cycle 
(corresponding to a PNE). 
This will be proved using a \emph{pairing procedure} described next.

\begin{tcolorbox}[float,floatplacement=!tbh, title={Pairing Procedure: searching for disjoint $4$-cycles.}]
\begin{algorithm}[H]
\textbf{Input:} 
Bipartite digraph $\D$
with vertex bipartition
$R = ({r}_1, \ldots, {r}_n)$ 
and $C=({c}_1, \ldots, {c}_n)$,
and the set of arcs $E(\D)$.
\\
\textbf{Output:} 
A set $ \mathcal Z$, containing either disjoint cycles $\{ ( {r}_i , {c}_j, {r}_k, {c}_\ell ) \}$ or a PNE.

The initialization is by:
(i1) letting $\mathcal{Z} = \emptyset$;
(i2) letting $E = E(\D)$ contain the set of available arcs in $\D$ that are not explored;
(i3) letting
$\mathfrak{R} = R$ and
$\mathfrak{C} = C$ 
contain the vertices that are candidates of the pairing process in $R$ and $C$, respectively.
\\
\While{ $ | \mathfrak{R} | \ge 2 $ 
and $\mathcal{Z}$ has no PNE } {
Let $r_i$ be the first vertex in $\mathfrak{R}$, i.e., the vertex with the smallest index.\\
Expose the first $m(n, p)$ arcs in $E$ 
outgoing from and incoming to ${r}_i$. 
Denote by $\widetilde{N}^+_E( {r}_i) $ and 
$\widetilde{N}^-_E( {r}_i) $ these sets of arcs. \\
\For{$r_j \in \mathfrak{R}$ with $0 < j-i \le d(n,p)$}
{
{
\For{
${c}_p \in \widetilde{N}^+_E( {r}_i)$ and ${c}_q \in \widetilde{N}^-_E( {r}_i)$ }
 {
 Check if any of the vertices explored form a PNE? If yes, stop, and return $\mathcal{Z}$; if not, continue.\\
{
\If{ $ {r}_j \in \widetilde{N}^+_E({c}_p)$ and $ {r}_j \in \widetilde{N}^-_E({c}_q) $ }{

Add $( {r}_i , {c}_p, {r}_j, {c}_q )$ to set $\mathcal{Z}$, 
remove ${r}_i, {r}_{i+1}, {r}_j$ from $\mathfrak{R}$,
and remove ${c}_p, {c}_q$ from $\mathfrak{C}$;

Remove arcs
$\cup_{v \in \widetilde{N}^+_E({c}_p)} ( {c}_p, v ) $
and $\cup_{v \in \widetilde{N}^-_E({c}_q)} ( v, {c}_q ) $ from $E$;

Remove
$\cup_{u \in \widetilde{N}^+_E({r}_i )} ( {r}_i , u )$
and 
$\cup_{u \in \widetilde{N}^-_E( r_i )} ( u, {r}_i ) $
from $E$; \\
Restart the while loop from line 4.
}
}
}
}

} 
Remove
$\cup_{u \in \widetilde{N}^+_E({r}_i )} ( {r}_i , u )$
and $\cup_{u \in \widetilde{N}^-_E( r_i )} ( u, {r}_i ) $ from $E$;

{
$\mathfrak{R} \gets \mathfrak{R} \setminus \{{r}_i \}$ and restart the loop from line 4. }
}
\Return{$\mathcal{Z}$}
\end{algorithm}
\end{tcolorbox}

\begin{tcolorbox}
\begin{cl}
\label{pairing}
Let $\mathcal{Z}$ be the output of the pairing procedure, and define the event 
$$
J := \{\mathcal{Z}\mbox{ contains at least $n/100$ disjoint  $4$-cycles or a PNE}\,\}.
$$
Then for $p$ satisfying \eqref{local-p}, we have 
$
\P{J^c} 
\le
e^{ - 2n }
$,
where $J^c$ denotes the complement of event $J$.
\end{cl}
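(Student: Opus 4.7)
The plan is to view the while loop as a sequence of iterations, each processing one vertex $r_i \in \mathfrak{R}$, and to reduce the analysis to a concentration bound for an i.i.d. Bernoulli sequence. Each iteration either succeeds, producing a $4$-cycle and removing three vertices from $\mathfrak{R}$ and two from $\mathfrak{C}$, or fails and removes only $r_i$. Writing $N_{\mathrm{succ}}$ and $N_{\mathrm{fail}}$ for the numbers of each outcome, the total iteration count $T$ satisfies $T \ge (n-O(1))/3$, since $3 N_{\mathrm{succ}} + N_{\mathrm{fail}} \ge n - O(1)$. If a PNE is ever produced, the event $J$ holds directly, so it suffices to lower-bound $N_{\mathrm{succ}}$ under the assumption that none is found.

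The heart of the argument is a uniform-in-history lower bound on the conditional success probability of a single iteration. The key structural fact is that the arc-removal rules of the procedure delete an arc from $E$ only when its tail or head is an already-processed row $r_{i'}$ (with $i' < i$) or a used pair-component $c_p$ or $c_q$. In particular, when processing a fresh $r_i$, none of its arcs has been queried, so the exposed sets $\widetilde{N}^{\pm}_E(r_i)$ are, conditional on the history $\mathcal{F}$, fresh size-$m(n,p)$ samples; moreover, for each surviving candidate $r_j$ in the window $\{i+1,\ldots,i+d(n,p)\} \cap \mathfrak{R}$, the arcs $(c,r_j)$ and $(r_j,c)$ for $c \in \widetilde{N}^{\pm}_E(r_i)$ are also unqueried. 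A short induction on iteration index is needed here, both to rule out cross-contamination from earlier iterations whose locality windows overlapped the current one, and to certify that the number of window candidates removed by prior iterations is at most a small fraction of $d(n,p)$. Given this freshness, for each surviving $r_j$ the probability that a valid $(c_p,c_q)$ exists equals $\bigl(1-(1-p)^{m(n,p)}\bigr)^2$; choosing $m(n,p)$ of order $1/p$, which is admissible throughout the range \eqref{local-p} since $np$ is large, makes this at least some constant $\eta > 0$. Independence across at least $d(n,p)/2$ surviving candidates then yields
\[
\P{\text{iteration fails} \mid \mathcal{F}} \le (1-\eta)^{d(n,p)/2} =: \varepsilon_0,
\]
a constant strictly less than $1$ that can be made arbitrarily small by tuning $m(n,p)$.

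With this per-iteration bound, $N_{\mathrm{succ}}$ stochastically dominates $\mathrm{Bin}(T,\,1-\varepsilon_0)$. The elementary lower-tail estimate
\[
\P{\mathrm{Bin}(N,1-\varepsilon_0) < N/100} \le \binom{N}{\lceil N/100 \rceil} \varepsilon_0^{\lceil 99 N/100 \rceil} \le \bigl(100 e\,\varepsilon_0^{99}\bigr)^{N/100},
\]
evaluated at $N \ge (n-O(1))/3$, yields a bound of the form $e^{-c n}$; choosing $\varepsilon_0$ small enough (by taking the implicit constant in $m(n,p)$ large enough) makes $c > 2$, giving $\P{J^c} \le e^{-2n}$. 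The main obstacle is the freshness-and-independence bookkeeping in the second paragraph: one must trace, across iterations, precisely which arcs have been queried or removed, and verify that enough candidates in a locality window remain fresh despite the accumulation of deletions from prior iterations. Once that is done, the remainder is essentially a routine Chernoff-type concentration argument.
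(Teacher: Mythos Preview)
Your overall architecture matches the paper's: reduce to a per-iteration success bound uniform in the history, then apply a Chernoff-type tail estimate to $\mathrm{Bin}(\lfloor n/3\rfloor,\,1-\varepsilon_0)$. The gap is quantitative, and it sits exactly where you wave your hands.

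You write that the per-candidate probability is $(1-(1-p)^{m(n,p)})^2$ and that ``choosing $m(n,p)$ of order $1/p$'' makes this a constant $\eta>0$. But $\widetilde N^{\pm}_E(r_i)$ can have at most as many vertices as $r_i$ actually has as out/in-neighbours, and that number is concentrated around $np$. Throughout the range \eqref{local-p} one has $np^2\le e^{-104}2^{-16}$, hence $np\ll 1/p$; you simply cannot realise $m(n,p)\sim 1/p$ actual neighbours. With the only achievable order $m(n,p)=\Theta(np)$ (the paper takes $m(n,p)=\lceil 4np\rceil$), the per-candidate probability is
\[
\bigl(1-(1-p)^{\Theta(np)}\bigr)^2 \;=\; \Theta\bigl((np^2)^2\bigr)\;=\;\Theta(n^2p^4)\;=\;o(1),
\]
not a constant. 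This is precisely why the locality window has width $d(n,p)=2^{20}/(n^2p^4)$: the product $\eta\cdot d(n,p)$ is a large constant, so $(1-\eta)^{d(n,p)/2}\approx e^{-C}$ is a small constant, which is what the paper's Claim~\ref{ri} actually proves. Your conclusion that $\varepsilon_0$ is a small constant is therefore salvageable, but not via the route you describe.

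There is a second consequence of the same miscount. The freshness bookkeeping you defer (``a short induction on iteration index'') hinges on the bound $m(n,p)\cdot d(n,p)\le n/16$, which the paper uses to ensure at least $n/4$ unqueried $C$-vertices remain when $r_i$ is processed. With $m(n,p)\sim 1/p$ and $d(n,p)\sim 1/(n^2p^4)$ this product is $\sim 1/(n^2p^5)$, which for $p\sim n^{-2/3}$ is $\sim n^{4/3}\gg n$; the induction would fail outright. So the choice $m(n,p)=\Theta(np)$ is forced both by the degree count and by the contamination bound, and once you make it, $\eta$ is $\Theta(n^2p^4)$ and the rest of your sketch goes through essentially as in the paper.
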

\end{tcolorbox}

\begin{proof}
To find disjoint $4$-cycles,
we explore the vertices in $R$ sequentially.
For any two distinct vertices $r_i, r_j \in R$, 
we say that they are {\it paired} if they are in a directed $4$-cycle in $\D$. Let's now explain how the pairing procedure works.\\
$1)$ At each step, the pairing procedure focuses on a given vertex in $R$, say $r_i$. 
Either this vertex is successful, i.e., can be paired with another vertex $r_j$ that is within distance $d(n,p)$ from $r_i$, 
and satisfies certain rules described below, or it is unsuccessful, 
in which case the algorithm removes $r_i$, deletes all the arcs that have been revealed, and moves to $r_{i+1}$. 
If $r_i$ is successful and paired with $r_j$, 
the triplet $(r_i, r_{i+1}, r_j)$ is deleted from $R$, 
together with all the arcs that have been exposed so far, and the two vertices in $C$ that are part of the same 4-cycle of $(r_i, r_{j})$.\\ 
$2)$ We describe next what it means for $r_i $ to be a success. 
The task is to pair $r_i$ with some other vertex $r_j $ with $0 < j - i \le d(n, p)$.
Some of these vertices have been removed from previous iterations of the algorithm, 
but 
the fact that vertex $r_{i+1}$ is deleted when $r_i$ is successful
guarantees that there are at least $d(n, p)/2$ vertices that are still available
for the pairing attempts of vertices $\{ r_k : 0 < k - i \le d(n, p) \}$.

We impose further constraints on the number of arcs that can be exposed in the pairing attempts for $r_i$. 
Denote by $E_i$ the set of directed edges in $\D$ that have not been deleted by the time the algorithm reaches $r_i$. 
Note that for any $i \le j$, we have $E_j \subseteq E_i$.
For each vertex $v \in R \cup C$, denote by $N^+_{E_i}(v) $ and $N^-_{E_i}(v)$ 
the set of out-neighbors and in-neighbors of $v$ in the subgraph induced by the edge set $E_i$, respectively. 
Order the elements of $N^+_{E_i}( r_i) $ and $N^-_{E_i}( r_i)$ using the ordering in $C$, 
and truncate them as follows. Reveal (i.e., expose) only the first 
\begin{align*}
m(n, p) := \left\lceil 4 np \right\rceil
\end{align*}
arcs in $E_i$.
Denote by $\widetilde{N}^+_{E_i}( r_i) $ and 
$\widetilde{N}^-_{E_i}( r_i)$ the truncated neighbors described above.\\
$3)$ For each vertex $r_j$, with $0< j-i \le d(n, p)$, which was not deleted in a previous step of the algorithm, repeat the following. 
For each vertex $c_u \in \widetilde{N}^+_{E_i}( r_i) $ 
and $c_v \in \widetilde{N}^-_{E_i}( r_i) $, 
we check if $c_u \in \widetilde{N}^-_{E_i}( r_j ) $ 
and $c_v \in \widetilde{N}^+_{E_i}( r_i ) $, 
so that they form a 4-cycle. 
If such a cycle $( r_i, c_u, r_j, c_v)$ exists, we declare $r_i$ successful and paired to $r_j$ and the algorithm moves to $r_{i+2}$, after deleting all the vertices in the cycle and the arcs that have been exposed.
If $r_i$ cannot be paired to any of the $r_j$, 
the algorithm moves to $r_{i+1}$, after the deletion of vertices and arcs exposed. 

An illustration of the concepts used in the procedure is shown in Figure~\ref{fig:pairing}.

Next, we show that the algorithm described above works with very high probability. Recall that $a\wedge b := \min(a, b)$.
Using the constraint 
that
the in(out)-neighbors size is at most $m(n, p)$ due to truncation,
and in view of the condition on $p$ in \eqref{local-p},
we have that at most 
\begin{align*}
m(n, p) \cdot d(n, p)
\le 2^{3} np \cdot 2^{20} \useconstant{para}
= \frac{2^{23}}{np^3}
\le n/16
\end{align*}
out-neighbors or in-neighbors of $r_i $ have been exposed by the time the algorithm reaches $r_i$. 
Notice that the vertex deletions in $C$ happen only when 
cycles are detected, and therefore, at most 
$2 \ceil{n/3}$ vertices can be removed from $C$,
as the number of $4$-cycles 
is at most $n/3$,
by noting that in the pairing procedure,
if $r_i$ is paired with $r_j$, 
the triplet $(r_i, r_{i+1}, r_j)$ is deleted from $R$.
Hence, during the pairing process,
for any $r_i$, we have at least
\begin{align}
\label{n/4left}
n - 2  \ceil{\dfrac{n}{3}} - m(n, p) d(n, p) \ge n/4,
\end{align}
vertices left in $C$ for the pairing.
Hence, for each $i$, the 
size of out-neighbors $| \widetilde{N}^+_{E_i}( r_i ) |$
and in-neighbors $| \widetilde{N}^-_{E_i}( r_i ) |$ are 
larger than certain random variables 
$Y^+_i$ and $Y^-_i$, such that 
\begin{align}\label{Ydef}
Y_i^{\bullet} \sim {\rm Bin}\bR{ \left\lfloor \dfrac{1}{4} n\right\rfloor, p } \wedge m(n, p),
\end{align}
where $\bullet \in \{+, -\}$, 
and noting that both $\widetilde{N}^+_{E_i}( r_i )$
and $\widetilde{N}^-_{E_i}( r_i )$ are truncated at $m(n, p)$.

If the size of the common out-neighbors and in-neighbors  
$\abs{ \widetilde{N}^+_{E_i}(r_i ) \cap \widetilde{N}^-_{E_i}(r_i ) } \ge 1$ for some vertex $r_i$, 
then the game has a PNE and the algorithm detects it and returns it as an output. 
In that case, the algorithm stops and $r_i $ is declared to be a success. 
Now we assume $\abs{ \widetilde{N}^+_{E_i}(r_i ) \cap \widetilde{N}^-_{E_i}(r_i ) } =0$.

\begin{figure}[t]
\centering
\begin{tikzpicture}[thick, main/.style = {draw, circle}] 

\foreach \n in {1,2,3, 4} {   
	\node[main] (0\n) at (0,-1.5*\n) { $r_{\n} $ };
}

\foreach \n in {1,2,...,4} {   
	\node[draw, rectangle] (1\n) at (4,-1.5*\n) { $c_{\n} $ };
}

\node[circle,draw=blue!80, line width=1mm] (02) at (0,-1.5*2) { $r_2 $};
\node[circle,draw=blue!80, line width=1mm] (04) at (0,-1.5*4) { $r_4 $};

\node[circle,draw=blue!80, line width=1mm, dashed] (03) at (0,-1.5*3) { $r_3 $};

\draw[onethirdarrow={>}, blue, ultra thick] (02) -- (11); 
\draw[onethirdarrow={>}, blue, ultra thick] (14) -- (02); 
\draw[onethirdarrow={>}, blue, ultra thick] (11) -- (04); 
\draw[onethirdarrow={>}, blue, ultra thick] (04) -- (14);

\draw[onethirdarrow={>}, ultra thick] (14) -- (01); 
\draw[onethirdarrow={>}, ultra thick, dashed] (01) -- (11); 
\end{tikzpicture}
\caption{Cycle $(r_2, c_1, r_4, c_4)$ is selected in the set $\mathcal Z$ of disjoint cycles.
Vertex $r_3$ is removed as $r_2$ is paired.
Vertex $r_1$
is failed to pair,
as arcs $(r_1, c_1)$ and $(c_4, r_1)$ were explored 
during the attempt to pair $r_1$ with $r_4$, 
but arc $(r_1, c_1)$ is missing.
We also have $c_4 \in
\widetilde{N}^-_{E_1}( r_1 ) \cap \widetilde{N}^-_{E_1}( r_2 )$.
}\label{fig:pairing}
\end{figure}
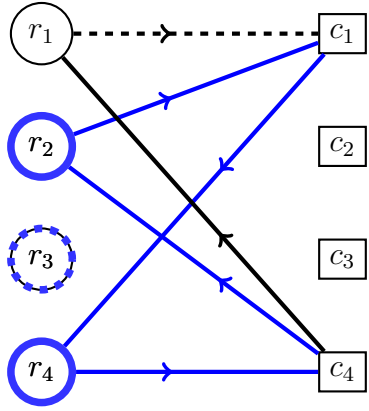

Denote by $\tau_i$ the time the algorithm considers vertex $r_i \in R$ 
and tries to pair it 
with some potential $r_j \in R$,
where  $j$ ranges $0 < j - i \le d(n, p)$. 
We set $\tau_i = \infty$ if either $r_i$  was successfully paired with an $r_s$, with $0< i-s < d(n,p)$, or if $r_{i-1}$ was paired
(so $r_i$ is removed).
There are dependencies among the events that $r_i$ is paired and previous pairing attempts.
We show that with high probability,
vertex $r_i$ is paired.
Suppose that $\tau_i<\infty$. 
Let $U_i$ be any fixed outcome of the algorithm by the time  $\tau_i$. More precisely, $U_i$ is the collection of paired vertices, and arcs and vertices that were deleted by the algorithm by $\tau_i$. As mentioned above, we assume that $r_i$ is not among these vertices that have been paired or removed, otherwise, we would have $\tau_i =\infty$. We will obtain 
the following uniform bound over all $U_i$.




\begin{cl}
\label{ri}
\begin{align}
\P{r_i \text{ is paired}\;|\; U_i} \ge 0.99999.
\label{eq:pivin}
\end{align}
\end{cl}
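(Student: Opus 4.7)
My plan is to condition on the history $U_i$ at time $\tau_i$ and to prove $\P{r_i\text{ paired}\mid U_i}\ge 0.99999$ in three steps: a high-probability lower bound on the truncated neighborhoods $\widetilde N^{\pm}_{E_i}(r_i)$, a per-candidate pairing probability estimate, and a product bound exploiting near-independence across candidates.

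First I would introduce the ``large-neighborhood'' event $G$ that both $|\widetilde N^+_{E_i}(r_i)|$ and $|\widetilde N^-_{E_i}(r_i)|$ are at least $np/8$. By \eqref{n/4left} and \eqref{Ydef}, each of these sizes stochastically dominates $\mathrm{Bin}(\lfloor n/4\rfloor,p)\wedge m(n,p)$; since $m(n,p)=\lceil 4np\rceil\ge np/8$, the truncation is harmless, and the standard Chernoff lower tail yields $\P{G^c\mid U_i}\le 2e^{-np/32}$. The hypothesis $p\ge 2^{9}n^{-2/3}$ gives $np\ge 2^{9}n^{1/3}\to\infty$, so this failure probability is far below $10^{-5}/2$. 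On $G$ I may also assume $\widetilde N^+_{E_i}(r_i)\cap\widetilde N^-_{E_i}(r_i)=\emptyset$, since any common vertex would produce a $2$-cycle and hence a PNE already detected by the procedure.

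Next, for each surviving candidate $r_j\in\mathfrak{R}$ with $0<j-i\le d(n,p)$, I would bound from below the probability that $(r_i,r_j)$ closes into a $4$-cycle. Define $A_j=\{\exists\,c_u\in\widetilde N^+_{E_i}(r_i):(c_u,r_j)\in E\}$ and $B_j=\{\exists\,c_v\in\widetilde N^-_{E_i}(r_i):(r_j,c_v)\in E\}$; a pairing corresponds exactly to $A_j\cap B_j$. Because $A_j$ is determined by arcs entering $r_j$ and $B_j$ by arcs leaving $r_j$, the two events depend on disjoint arc sets and are independent given $U_i$ and the revealed neighborhoods. In the regime $p\le e^{-52}2^{-8}n^{-1/2}$ we have $np^2\le 1$, so the elementary inequality $1-(1-p)^{k}\ge kp/2$ for $kp\le 1$ applied with $k\ge np/8$ gives $\P{A_j\mid G,\widetilde N^{\pm}_{E_i}(r_i)}\ge np^2/16$ and similarly $\P{B_j\mid\cdots}\ge np^2/16$; hence $\P{A_j\cap B_j\mid\cdots}\ge n^2p^4/256$.

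Finally, the events $\{A_j\cap B_j\}_j$ are mutually independent given $U_i$ and $\widetilde N^{\pm}_{E_i}(r_i)$, as they involve pairwise disjoint arc sets (arcs incident to distinct $r_j$). The algorithm guarantees at least $d(n,p)/2$ candidates remain (since $r_{k+1}$ is removed whenever $r_k$ succeeds), so
\[
\P{r_i\text{ fails to pair}\mid G}\le\bigl(1-n^2p^4/256\bigr)^{d(n,p)/2}\le\exp(-2^{11}),
\]
using $d(n,p)/2=2^{19}/(n^2p^4)$. Combining with the first step gives $\P{r_i\text{ paired}\mid U_i}\ge 1-2e^{-np/32}-e^{-2^{11}}\ge 0.99999$ for all sufficiently large $n$. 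The main obstacle is precisely the independence claim: after conditioning on $U_i$ I must verify that the arcs determining the $A_j$ and $B_j$ are genuinely fresh $\mathrm{Bernoulli}(p)$ variables. This is what the truncation $m(n,p)$ and locality radius $d(n,p)$ are designed to ensure---by the bookkeeping in \eqref{n/4left}, at most $m(n,p)\,d(n,p)\le n/16$ arcs into or out of each candidate $r_j$ have been revealed in prior iterations, so restricting $\widetilde N^{\pm}_{E_i}(r_i)$ to the ``fresh'' subset of $C$ still leaves $\Theta(np)$ vertices whose arcs to $r_j$ are untouched, which is enough to run the above computation with marginally worse constants.
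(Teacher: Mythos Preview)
Your proposal is correct and follows essentially the same three-step architecture as the paper: condition on $r_i$'s truncated in- and out-neighborhoods being of order $np$ via Chernoff, lower-bound the per-candidate pairing probability by $\Theta(n^2p^4)$ using independence of the in-arc and out-arc events at $r_j$, and take a product over the $\ge d(n,p)/2$ surviving candidates to obtain a doubly-exponentially small failure probability. Your version is in fact slightly cleaner than the paper's, since you condition only on the event $G$ concerning $r_i$'s neighborhoods rather than the paper's larger event $Q_i$ (which also controls the $Y_j^\pm$ for every candidate $r_j$, a condition not actually used in the pairing estimate); otherwise the constants differ but the argument is the same.
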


Using  \eqref{eq:pivin}, together with the fact that there are at least $\lfloor n/3 \rfloor$ pairing trails, 
allows us to conclude that the algorithm either returns a pure Nash equilibrium
or a set of disjoint $4$-cycles, whose size is stochastically larger than ${\rm Bin}(\lfloor n/3 \rfloor, 0.99999)$. 
Hence, the set $\mathcal{Z}$ either has a PNE 
or has size
stochastically larger than a binomial with parameters $\floor{n/3}$ and $0.99999$.
Using the Chernoff bound, 
we have
\begin{align*}
\P{J}
= \P{ \abs{ \mathcal{Z} } \ge n/100 \mbox{ or } \mathcal{Z} \mbox{ has PNEs} }
\ge 1- 
\Exp{ - \lfloor n/3 \rfloor D( 0.03 \| 0.99999 )  },
\end{align*}
where $D$ denotes the relative entropy defined by
\begin{align*}
D\left( q \,\|\, p \right)
:= q \log \dfrac {q}{p} + (1-q) \log \dfrac {1-q}{1-p}.
\end{align*}
Then noting
$$
D( 0.03 \,\|\, 0.99999 )
= 0.03 \log \dfrac {0.03}{0.99999} +
0.97 \log \dfrac {0.97}{0.00001} > 9
$$
completes the proof of Claim~\ref{pairing}.
\end{proof}

It remains to prove Claim \ref{ri}.

\begin{proof}[Proof of Claim \ref{ri}]
Fix a vertex $r_i$. Recall that we use  $\tau_i$ to denote the time the procedure reaches $r_i$ and tries to pair it with potential partners $r_j$ with $0<j-i\le d(n,p)$. 
Below, we focus on $r_j$ with $0< j-i \le d(n, p)$, which is not deleted by the pairing procedure, by time $\tau_i$.

Recall that we have at least  $d(n, p)/2$ such vertices $r_j$ 
to pair with $r_i$,
because of Step 18 of the pairing procedure, that is,
we skip the next vertex after a vertex that is successfully paired.
For each such  $r_j$,
in order for $r_i$ and $r_j$ to be paired, it suffices to have both 
$\abs{ \widetilde{N}^+_{E_i}(r_i )
\cap \widetilde{N}^-_{E_i}(r_j) }$ 
and $\abs{ \widetilde{N}^-_{E_i}(r_i )
\cap \widetilde{N}^+_{E_i}(r_j) }$ to be positive. 
Recalling the definition of $Y^+_i$ and $Y^-_i$
in \eqref{Ydef},
conditioning on event $
\bC{ Y^+_i\wedge Y^-_i \ge n p /12 }$,
we have that both $\abs{ \widetilde{N}^+_{E_i}(r_i )
\cap \widetilde{N}^-_{E_i}(r_j) }$ 
and $\abs{ \widetilde{N}^-_{E_i}(r_i )
\cap \widetilde{N}^+_{E_i}(r_j) }$ 
are stochastically larger than two independent binomial random variables
$\bin( \lfloor n p/12 \rfloor, p)$.

Next, we prove that with high probability $Y^+_i\wedge Y^-_i \ge n p /12$. 
By recalling, from \eqref{n/4left}, that 
we have at least $n/4$ vertices left for the pairing algorithm, 
the variable ${Y}_j^+$ is bounded below by
$$
Z_j \sim {\rm Bin}\bR{ \left\lfloor\frac{1}{4}n
\right\rfloor, p} \wedge 
m(n,p).
$$

Hence we have
$$
\begin{aligned}
&\P{{Y}^+_j  \le \dfrac{1}{12} n p\;|\; U_i } \le  \P{Z_j  \le \dfrac{1}{12} n p}.
\end{aligned}
$$
Using the Chernoff bound, we conclude that 
\begin{align*}
&\P{{Y}^+_j \wedge {Y}^-_j \le \dfrac1{12} np \;|\; U_i } \le  e ^{- \gamma np}
\end{align*}
for some constant $\gamma >0$.
Define
\begin{align}
\label{Qi}
Q_i := \bigcap_{j=i}^{i+\floor{d(n,p)}} \bC{Y^+_j \wedge Y^-_j \ge n p/12}.
\end{align}
Using the union bound, we obtain
\begin{align}
\label{probQi}
\P{Q_i^c \mid  U_i } 
&= 
\P{ \bigcup_{j=i}^{i+\floor{d(n,p)}} \bC{Y^+_j \wedge Y^-_j \le n p/12} \mid U_i } \notag\\
&\le
(d(n,p)+1) 
\P{{Y}^+_j \wedge {Y}^-_j \le \dfrac1{12} np \;|\; U_i }
= o(1).
\end{align}




The conditional probability (under $\mathbf{P}_p$) that $r_i$ is paired to $r_j$, given that
$r_j$ is not deleted by the time $\tau_i$ and event $Q_i \cap U_i$, is at least 
\begin{align}
\label{zeta}
\bR{ 1 - (1- p)^{\frac{1}{12} n p} }^2
(1- o(1)) + o(1) 
\ge 
\dfrac{1}{150}
n^2p^4 \bR{ 1 - \dfrac{
np^2}{12-np^2} }^2 =: \zeta,
\end{align}
for all sufficiently large $n$,
by noting that
 $np^2 \le  e^{-104} 2^{-16}$
and
\begin{align*}
1 - (1- p)^{\frac{1}{12} n p}
\ge \dfrac{1}{12}np^2 - \sum_{i \ge 2} 
\binom{\lceil\frac{1}{12}np\rceil}{i} p^i
\ge \dfrac{1}{12}np^2 
\bR{ 1 - \sum_{i \ge 1}
\bR{ \frac{1}{12}np^2 }^ i }.
\end{align*}

Recall that 
when processing vertex $r_i$, we have at least
\begin{align*}
\floor{ d(n, p)/2 }
\ge 2^{18} \useconstant{para} 
\end{align*}
available vertices $r_j$ within distance $d(n,p)$ as potential candidates for pairing with vertex $r_i$.
Then we have, using \eqref{probQi} and \eqref{Qi}, 
that
\begin{align}
\label{eq:bouneta}
\P{r_i \text{ is paired}\;|\; U_i }
&= \P{r_i \text{ is paired} \mid Q_i \cap U_i }
\P{ Q_i  \mid U_i } 
\notag\\
&\qquad\qquad\qquad\qquad
+ \P{r_i \text{ is paired} \mid Q_i^c \cap U_i }
\P{ Q_i^c  \mid   U_i } \notag\\
&\ge
(1- (1-\zeta)^{ 2^{18} \useconstant{para} }) (1-   {o(1)}) + o(1).
\end{align}

In view of the definition of 
$\useconstant{para}$ in \eqref{k2def} and $\zeta$ defined in \eqref{zeta}, we have
\begin{align*}
(1-\zeta)^{ 2^{18} \useconstant{para} }
\le 
\Exp{- 2^{18} \zeta\useconstant{para} }
&\le 
\Exp{-
\bR{ 1 - \dfrac{
np^2}{12-np^2} }^2 2^{10} }
\le 0.00001.
\end{align*}
By plugging in \eqref{eq:bouneta}, we obtain \eqref{eq:pivin}
and complete the proof.
\end{proof}


\subsection{An efficient algorithm for Regime 2}

We can now prove Theorem~\ref{thm-local}, that our generic algorithm gives an efficient algorithm for Regime 2.




\begin{proof}[Proof of Theorem \ref{thm-local}]
Similarly to Lemma \ref{Pfail}, we have
\begin{align*}
&\P{ \mathcal{D}_n \text{ has no
stable $4$-cycle and no PNE}} \\
&\le 
\P{\D \mbox{ has less than $n/100$ disjoint $d(n, p)$-localized $4$-cycles and no PNE}} \\
&\qquad+ \P{ 
\D \text{ has at least $n/100$ disjoint unstable $d(n, p)$-localized pairs}  } \\
&\le 
\P{\mathcal{Z}\mbox{ has less than $n/100$ disjoint $d(n, p)$-localized $4$-cycles and no PNE}} \\
&\qquad+ \P{ 
\D \text{ has at least $n/100$ disjoint unstable $d(n, p)$-localized pairs}  }.
\end{align*}
Then, by \eqref{time0},
the expected running time of the generic algorithm (used with $\ell=2$) is at most
\begin{align*}
O(n^2) + \O{ e^{-p^2n^2}  \binom{n}{2}^2
\poly(n) } + \O{ \clh^n }
\P{ \mathcal{D}_n \text{ contains no 
stable $4$-cycle and no PNE}}.
\end{align*}

By Claim \ref{pairing},
\begin{align*}
 \clh^n 
 \P{\mathcal{Z}\mbox{ has less than $n/100$ disjoint $d(n, p)$-localized $4$-cycles and no PNE}}
\le \clh^n e^{-2n}
< e^{-n}.
\end{align*}
From \eqref{PallBad}, we have 
\begin{align*}
 \clh^n \P{ 
\D \text{ has at least $n/100$ disjoint unstable $d(n, p)$-localized pairs}  }
&\le \clh^n e^{-n}.
\end{align*}
This completes the proof.
\end{proof}

\section{Closing some of the gaps via sprinkling}
\label{sec:gap}

Finally, in this section, we use the {\em sprinkling technique}, described below, to cover two gaps in our previous analysis.
Given $p$ and $p_0$ with $p >p_0$, define $p_1:=
(p - p_0)/(1 - p_0)$. We define a measure $\mathbf{P}$ using a two-step exposure of the edges. We first consider a realization of the random digraph  denoted by $\D^{(0)}$ which, under $\mathbf{P}$ has the same distribution as $\D$ under $\mathbf{P}_{p_0}$. Consider the random graph $ \pi(\D^{(0)})$ obtained from $\D^{(0)}$ by adding directed edges, with the following procedure. 
For each potential arc not present in $\mathcal{D}_n^{(0)}$, add it independently with probability $p_1$.
This procedure is often called \emph{sprinkling}. The distribution of  $ \pi(\D^{(0)})$ under $\mathbf{P}$ is the same as $\D$ under $\mathbf{P}_{p}$.

Fix a cycle $H$ in $\D^{(0)}$, we say that $H$ 
is {\it $\delta$-admissible}
if each vertex in $H$ has out-degree less than $(1+\delta)np_0$ in $\D^{(0)}$.
Define $\mathcal{M}_{\delta}(H)$ to be the event that
$H$ is $\delta$-admissible in $\D^{(0)}$.
In the following subsections, we discuss two cases where we focus on cycles of length 4 and 6, respectively.


\subsection{Case I: $e^{-4/3} n^{-2/3}\le p \le 2^9 n^{-2/3}$}

Let
$p_0 = e^{-4/3} n^{-2/3}$.
We consider the regime 
$p_0 < p \le 2^9 n^{-2/3}.
$
The basic idea of the proof below is that,
with sufficiently high probability, $\mathcal{D}_n^{(0)}$ contains many disjoint stable 4-cycles,
and sprinkling preserves enough of these cycles.
Therefore, there is, with high probability, at least one preserved cycle that remains stable.


\begin{tcolorbox}
\begin{tm}
\label{4sprink}
For
 $e^{-4/3} n^{-2/3}\le p \le 2^9 n^{-2/3}$, the expected running time of the generic algorithm with parameter $\ell=2$ 
 is $\O{ n^2 }$.
 \end{tm}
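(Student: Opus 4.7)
The plan is to follow the sprinkling framework of the preamble with $p_0 = e^{-4/3} n^{-2/3}$ and $p_1 = (p - p_0)/(1-p_0)$, so that $\pi(\mathcal{D}_n^{(0)})$ has distribution $\mathbf{P}_p$. The key point is that $p_0$ lies on the boundary where Regime 1 applies with $\ell=2$: checking that $\delta_0 = \tfrac13(1-4/\log n)$ satisfies \eqref{del-Cond} and $2 \in \calL(\delta_0)$ for large $n$, Claim~\ref{cl:X=0} and Lemma~\ref{le:bonc} are available at $p_0$.

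\textbf{Step 1: many disjoint stable $\delta$-admissible 4-cycles in $\mathcal{D}_n^{(0)}$.} Claim~\ref{cl:X=0} applied at $p_0$ yields $\ge n/4$ disjoint 4-cycles in $\mathcal{D}_n^{(0)}$ with failure probability $O(\clh^{-n})$. Lemma~\ref{le:bonc} with some $\beta \in (0.22,1/4)$ gives that at most $\beta n$ of any family of disjoint 4-cycles are unstable, again with failure probability $O(\clh^{-n})$; this works because $n^4 p_0^6 = e^{-8}$, so $(e\ell/\beta \cdot n^{2\ell} p_0^{2\ell+2})^{\beta n}$ can be made below $\clh^{-n}$ in this window of $\beta$. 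Together these yield $\ge \alpha_0 n$ disjoint stable 4-cycles for some $\alpha_0 > 0$. Finally, vertex out-degrees in $\mathcal{D}_n^{(0)}$ are independent $\bin(n,p_0)$ with mean $np_0 = e^{-4/3}n^{1/3} \to \infty$, so by Chernoff with a large enough constant $\delta$ the number of vertices having out-degree $\ge (1+\delta)np_0$ is less than $\alpha_0 n/10$ with failure probability $O(\clh^{-n})$; this is where $np_0 \to \infty$ is essential. Because the cycles in our family use disjoint vertex sets, this removes at most $\alpha_0 n/10$ cycles, leaving $\ge \alpha n := 9\alpha_0 n/10$ disjoint stable $\delta$-admissible 4-cycles.

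\textbf{Step 2: per-cycle preservation under sprinkling.} For any $\delta$-admissible stable 4-cycle $H \subset \mathcal{D}_n^{(0)}$, $H$ becomes unstable in $\pi(\mathcal{D}_n^{(0)})$ only if some vertex $v \in V\setminus V(H)$ becomes a common out-neighbour of two vertices on one side of $H$. By admissibility, at most $2(1+\delta)np_0$ outside vertices already have one such arc (each contributes failure probability $p_1$), and the remaining outside vertices contribute $p_1^2$ each. Summing over the two sides of $H$, a union bound gives
\begin{align*}
\mathbf{P}[H \text{ unstable in } \pi \mid \mathcal{D}_n^{(0)}] \le 4(1+\delta)np_0 p_1 + 2n p_1^2 = O(n^{-1/3}),
\end{align*}
using $np_0 p_1, np_1^2 = O(n^{-1/3})$ in the relevant range of $p$.

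\textbf{Step 3: independence and conclusion.} For disjoint cycles $H_i, H_j$, the stability of $H_i$ in $\pi$ depends only on arcs with source in $V(H_i)$, and similarly for $H_j$. Since $V(H_i) \cap V(H_j) = \emptyset$, these arc sets are disjoint, so conditional on $\mathcal{D}_n^{(0)}$ the events $\{H_i \text{ stable in } \pi\}$ are mutually independent. On the event of Step 1,
\begin{align*}
\mathbf{P}\bigl[\text{all $\alpha n$ cycles unstable in } \pi \mid \mathcal{D}_n^{(0)}\bigr] \le \bR{O(n^{-1/3})}^{\alpha n} = e^{-\Omega(n \log n)}.
\end{align*}
Combining with Step 1, $\mathbf{P}_p[\mathcal{D}_n \text{ has no stable 4-cycle}] \le O(\clh^{-n})$. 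Plugging this and the PNE-bound \eqref{pne} (super-exponentially small in our range) into \eqref{time0} gives expected running time $O(n^2)$.

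The main obstacle is Step 1: the Lemma~\ref{le:bonc} bound is tight at $p_0$, admitting only a narrow window of $\beta$ just below $\tau = 1/4$ where simultaneously (i) $(\tau-\beta)n$ is linearly large and (ii) the unstable-cycle failure probability absorbs the factor $\clh^n$. Verifying that $p_0 = e^{-4/3}n^{-2/3}$ sits just inside this window is precisely why the constant $e^{-4/3}$ appears in the theorem's hypothesis.
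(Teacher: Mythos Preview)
Your proposal is correct and follows essentially the same sprinkling strategy as the paper: both set $p_0 = e^{-4/3}n^{-2/3}$, use Claim~\ref{cl:X=0} and Lemma~\ref{le:bonc} at $p_0$ to produce a linear-in-$n$ family of disjoint stable $4$-cycles in $\mathcal{D}_n^{(0)}$, restrict to $\delta$-admissible cycles, and then exploit conditional independence of the destruction events under sprinkling via the bound of Lemma~\ref{lm:sprink}. The only cosmetic differences are that the paper takes $\delta_0 = n^{-1/8}$ (you use a constant $\delta$, which works equally well), the paper fixes $\beta = 6/25$ (inside your interval $(0.22,1/4)$), and the paper separates the admissibility pruning as a binomial tail on cycles rather than on vertices; none of these affects the argument.
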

\end{tcolorbox}

Fix a set $H$ consisting of four vertices, containing exactly two vertices in $R$ and exactly two in $C$. 
For any $\bullet \in \{ \D^{(0)}, \D\}$ denote by $W(H, \bullet)$ the event that vertices in $H$ are part of a stable 4-cycle in $\bullet$.

\begin{tcolorbox}
\begin{lm}
\label{lm:sprink}
Let $H$ be a $\delta$-admissible $4$-cycle with vertices $\{r_1,r_2\} \subset R$ and $\{c_1,c_2\} \subset C$ that is stable in $\D^{(0)}$.
The probability that it becomes unstable in $\D$ is at most $2np_1^2 + 4(1+\delta) n p_0p_1$.  
\end{lm}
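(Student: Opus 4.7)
The plan is to apply Definition~\ref{def:unst} directly and bound, via a union bound, the probability that some vertex outside $V(H)$ becomes a ``bad common neighbour'' after the sprinkling step. With $A = \{r_1, r_2\}$ and $B = \{c_1, c_2\}$, the cycle $H$ fails to be stable in $\D$ if and only if at least one of the following occurs: (a) some $r \in R \setminus A$ is a common out-neighbour of both $c_1$ and $c_2$ in $\D$; or (b) some $c \in C \setminus B$ is a common out-neighbour of both $r_1$ and $r_2$ in $\D$. I would bound each of these events by $np_1^2 + 2(1+\delta)np_0 p_1$ and then add the two contributions.

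To handle case (b), I would fix $c \in C \setminus B$ and split according to the status of the arcs $(r_1, c)$ and $(r_2, c)$ in $\D^{(0)}$. Because $H$ is stable in $\D^{(0)}$, at most one of these two arcs can be present in $\D^{(0)}$. If neither arc is present, sprinkling must add both, an event of probability $p_1^2$; there are at most $n$ such candidates $c$, so these contribute at most $np_1^2$ via the union bound. If exactly one of the arcs is present, sprinkling needs only add the missing one, an event of probability $p_1$; here the number of candidates $c$ is at most the sum of the out-degrees of $r_1$ and $r_2$ in $\D^{(0)}$, which is at most $2(1+\delta)np_0$ by $\delta$-admissibility of $H$. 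These contribute at most $2(1+\delta)np_0 p_1$, giving the bound $np_1^2 + 2(1+\delta)np_0 p_1$ for case (b).

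Case (a) is fully symmetric after interchanging the two sides of the bipartition: the relevant out-degree bound is now on $c_1$ and $c_2$, both of which lie in $V(H)$ and hence are $\delta$-admissible as well. Thus case (a) yields the same bound $np_1^2 + 2(1+\delta)np_0 p_1$, and summing the two gives the claimed $2np_1^2 + 4(1+\delta)np_0 p_1$.

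The only subtlety, rather than a genuine obstacle, is to verify that the per-vertex sprinkling probabilities $p_1$ and $p_1^2$ used above remain valid after conditioning on $\D^{(0)}$ (which in particular fixes the stability and $\delta$-admissibility of $H$, as well as the out-degree counts). This is immediate from the definition of sprinkling: each arc not already present in $\D^{(0)}$ is added independently with probability $p_1$, and these Bernoulli choices are independent of $\D^{(0)}$. Once this is noted, the argument is a routine casework plus union bound.
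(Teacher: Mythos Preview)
Your proposal is correct and follows essentially the same approach as the paper: a union bound split into the case where both offending arcs must be added by sprinkling (yielding the $2np_1^2$ term) and the case where one arc was already present in $\D^{(0)}$ (yielding the $4(1+\delta)np_0p_1$ term via the $\delta$-admissibility degree bound). Your organization---grouping first by side and then by arc status---differs only cosmetically from the paper's, which groups by arc status first.
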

\end{tcolorbox}
\begin{proof}
Consider a 4-cycle $H$ that is stable in $\D^{(0)}$. We assume that $H$ is $\delta$-admissible.  For it to become unstable in $\D$, new arcs must be added through sprinkling and create common neighbors for vertices on the same side. We consider two cases. \\
{\bf Case 1}: First, through sprinkling, two vertices on the same side could acquire a new common neighbor. This can happen for the vertices of $H$ that are either in $R$ or in $C$, or both. For each side, we need to add two new arcs to create a common neighbor, each with probability $p_1$, and there are at most $n$ possible vertices that could become the common neighbor. This gives a bound of $2np_1^2$ for the probability of new common neighbors created entirely through sprinkling.\\
{\bf Case 2}: 
We now consider vertices that were already neighbors of some vertices of $H$ in $\D^{(0)}$. 
Through sprinkling, any neighbor of $r_1$ connects to $r_2$ with probability $p_1$. As we assume $H$ to be $\delta$-admissible, 
using the union bound,
and repeating the same argument for $r_2, c_1, c_2$,
gives an upper bound $4 (1+ \delta) n p_0 p_1$ on the probability of creating new common neighbors.

Combining cases 1 and 2 above, we have
\begin{align*}
\mathbf{P}
( W^c(H,\D) \mid \D^{(0)}) I_{W(H,\D^{(0)})\cap \mathcal{M}_\delta(H)}
\le 
2np_1^2 + 4(1+ \delta) p_0 p_1 n
\end{align*}
as required,
where $I_{A}$ 
denotes the indicator of event $A$.
\end{proof}

We next bound the probability 
$\mathbf{P}( \mathcal{M}_{\delta_0}(H) )$,
where $\delta_0=n^{-1/8}$,
that is a given 4-cycle $H$
is $\delta_0$-admissible in $\D^{(0)}$.
By the union bound and Chernoff bound, 
\begin{align}
    \label{eq:ch1}
1 - \mathbf{P}( \mathcal{M}_{\delta_0}(H) )
\le
4\mathbf{P}({\rm Bin}(n, p_0) 
    \geq (1+\delta_0)np_0) 
\leq 4\exp\left(-\dfrac{1}{3}\delta_0^2 np_0 \right)
= e^{- \Omega( n^{1/12} ) }.
\end{align}
We will show that there are a large number of disjoint cycles in $\D^{(0)}$, which, combined with the bound above,
allows us to conclude that a large proportion of them will be $\delta_0$-admissible.

We are now ready to prove Theorem \ref{4sprink}.

\begin{proof}[Proof of Theorem \ref{4sprink}]
The expected running time of the generic algorithm with parameter $2$ is at most
\begin{equation}
\label{sprT}
O(n^2) + \O{ e^{-p^2n^2}  n^4 \poly(n) } + \O{ \clh^n }\,
\mathbf{P} ( \D \text{ has no stable $4$-cycle} ).
\end{equation}

Clearly,
\begin{align}
\label{3prob}
&\mathbf{P}_p\left(\D \text{ has no stable $4$-cycle}\right)
\le
\mathbf{P}\left(
\mathcal{D}_n^{(0)} \text{ has}\le
\dfrac n{100} \mbox{ disjoint stable } 4\mbox{-cycles}\right) \notag\\
&\qquad+\mathbf{P}\left( 
\mathcal{D}_n^{(0)} \text{ has} \ge
\dfrac n{100} \mbox{ disjoint stable } 4\mbox{-cycles}\;
and \;
\D
\text{ has no stable } 4\mbox{-cycle} \right).
\end{align}

We first work on the second probability appearing on the right-hand side of \eqref{3prob}. 
We achieve this in two steps.\\
\noindent {\bf Step 1}. 
Let $\mathcal B$ be the event that 
the number of disjoint stable 4-cycles $H$ in
$\D^{(0)}$ such that $\mathcal{M}_{\delta_0}(H)$ holds is at least $m = \floor{n/101}$. 
On event $\mathcal{B}$,  we define the set of random cycles $\mathcal C:= \{C_1, C_2, \ldots, C_m\}$ by considering the first $m$ disjoint cycles in
$\D^{(0)}$, according to an order fixed a priori. 
Using Lemma~\ref{lm:sprink}, we obtain
\begin{equation}
\begin{aligned}
\label{destroy}
&\mathbf{P}\left( 
\mathcal B
 \cap
\{   \D
\text{ has no disjoint stable } 4\mbox{-cycle}  \} \right)  \\
&\le{\bf E}\left[\sum_{\bm c} \prod_{i=1}^m \mathbf{P}\left(W^c(c_i, \D) \;|\; \D^{(0)}\right)I_{W(c_i,\D^{(0)})\cap \mathcal{M}_{\delta_0}(c_i)}I_{\mathcal{C} =\bm c }\right]  
\notag\\
&
\le \sum_{\bm c} 
\mathbf{P}(\mathcal{C} =\bm c)
\prod_{i=1}^m (2np_1^2 + 4(1+ \delta_0) p_0 p_1 n) 
= \bR{ 2np_1^2 + 4(1+ \delta_0) p_0 p_1 n   }^{ m} 
=  e ^{- \Omega( n \log n ) },
\end{aligned}
\end{equation}
where the summation in the second line is over 
the set containing all $m$ disjoint cycles in the complete bipartite graph $K_{n,n}$,
and we use the independence among disjoint cycles. \\
\noindent {\bf Step 2}. We next argue that, conditioning on the event that there are at least $n/100$ disjoint stable 4-cycles in $\mathcal{D}_n^{(0)}$, with very high probability, there are at least $n/101$ disjoint stable 4-cycles $H$ such that $\mathcal{M}_{\delta_0}(H)$ holds. 

First notice that the probability that a cycle $H$ does not satisfy $\mathcal{M}_{\delta_0}(H)$
is at most $e ^{-C n^{1/12}}$
for some $C>0$.
The probability that among $n/100$ disjoint cycles there are fewer than $m$ ones that  are $\delta_0$-admissible  is bounded by the probability that a binomial 
$Y \sim {\rm Bin}(\floor{n/100}, e ^{-C n^{1/12}})$ 
is larger than $cn$, where $c = 1/100 - 1/101$. 
By Lemma \ref{tail}, we have
$$
\mathbf{P}(Y\ge cn) 
\le
e ^{-
\Omega(n\log n)}.
$$


Next, we bound the first probability appearing on the right-hand side of \eqref{3prob}. Using the union bound gives
\begin{align*}
&\mathbf{P}\left(
\mathcal{D}_n^{(0)} \text{ has} \le
\dfrac n{100} \mbox{ disjoint stable } 4\mbox{-cycles}\right) \\
&\le 
\mathbf{P}\left(\mathcal{D}_n^{(0)} \text{ has} \le \dfrac n4 \text{ disjoint $4$-cycles}\right)
+ \mathbf{P}\left( 
\mathcal{D}_n^{(0)} \text{ has} \ge \dfrac{6}{25} n
\mbox{ disjoint unstable } 4\mbox{-cycles}\right).
\end{align*}

We bound the two probabilities on the right-hand side above.
Noting that 
$
np_0 = e^{-4/3} n^{1/3},
$
by Lemma~\ref{le:bonc}, we have
\begin{align*}
\mathbf{P}\left( 
\mathcal{D}_n^{(0)} \text{ has} \ge \dfrac{6}{25} n
\mbox{ disjoint unstable } 4\mbox{-cycles}\right)
\le
\bR{ \dfrac{50}{3} e n^{4} p^{6}_0 }^{ 96 n/ 400}
< e^{-n}.
\end{align*}
By Claim \ref{cl:X=0},
\begin{align*}
&\mathbf{P}\left(\mathcal{D}_n^{(0)} \text{ has $\le\dfrac n4$ disjoint $4$-cycles}\right)
\\
&\qquad\le 
\dfrac{4^n}{ n} \Exp{
- \dfrac{ 1 }{12 \cdot 2^4} 
\bR{ n p_0 }^{4} }
+ 
\dfrac{4^n}{ n} \Exp{ 
- \dfrac{1}{150 \cdot 2^{8}} n^2 p_0 }
= \O{ e^{-\omega(n)} }.
\end{align*}
This completes the proof
in view of \eqref{sprT} and \eqref{3prob}.
\end{proof}

\subsection{Case II:
$n^{-3/4+ 3/(2\log n)} \le p \le n^{-3/4+ 2/\log n}$}

We set $p_0 = n^{-\frac34 - \frac{3}{2\log n}}$, 
focus on the case $p = n^{-1 + \delta}$,
where
\begin{align*}
\delta \in
\bR{ \dfrac14 - \dfrac{3}{2\log n},
\dfrac14 + \dfrac{2}{\log n} },
\end{align*}
and obtain the following bound on the expected running time,
similar to that in the previous subsection.
\begin{tcolorbox}
\begin{tm}
For
 $n^{-3/4+ 3/(2\log n)} \le p \le n^{-3/4+ 2/\log n}$, the expected running time of the generic algorithm with parameter $\ell=3$ 
 is $\O{ n^2 }$.
 \end{tm}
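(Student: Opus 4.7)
The plan is to mirror the proof of Theorem~\ref{4sprink}, replacing stable $4$-cycles by stable $6$-cycles and running the generic algorithm with $\ell=3$. Set $p_0 := n^{-3/4-3/(2\log n)}$ and $p_1 := (p-p_0)/(1-p_0)$, and couple $\D = \pi(\D^{(0)})$ by two-step exposure, where $\D^{(0)}$ is drawn under $\mathbf{P}_{p_0}$ and each missing potential arc is then sprinkled in independently with probability $p_1$. The value $p_0$ is calibrated so that $\delta = 1/4 - 3/(2\log n)$ is \emph{exactly} the upper endpoint of the constraint $\delta \le (1+\ell)^{-1}(1 - 2\ell/\log n)$ that puts $\ell=3$ into $\calL(\delta)$ in~\eqref{Ldef}; in particular $np_0 = n^{1/4 - 3/(2\log n)}$, and a direct computation gives $n^6 p_0^8 = n^{-12/\log n} = e^{-12}$, a constant strictly less than one.

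Step 1 shows that with probability $1-e^{-\Omega(n)}$, $\D^{(0)}$ contains at least $m := \lfloor n/101 \rfloor$ disjoint stable 6-cycles that are also $\delta_0$-admissible, where $\delta_0 := n^{-1/16}$. Claim~\ref{cl:X=0} applied with $\ell=3$ and $p_0$ in place of $p$ gives that $\D^{(0)}$ has at least $n/6$ disjoint 6-cycles except on a doubly super-exponentially small event, since $(np_0)^6 = n^{3/2-o(1)}$ and $n^2 p_0 = n^{5/4-o(1)}$. Lemma~\ref{le:bonc} with $\ell=3$ and $\beta = 1/100$ yields
\[
\mathbf{P}_{p_0}\!\left(\D^{(0)}\text{ has at least }\tfrac{n}{100}\text{ disjoint unstable 6-cycles}\right) \le (300\,e \cdot e^{-12})^{n/100} = e^{-\Omega(n)},
\]
so via Lemma~\ref{Pfail} at least $n/6 - n/100$ disjoint stable 6-cycles survive in $\D^{(0)}$. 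The Chernoff bound of~\eqref{eq:ch1} generalized to the six vertices of a 6-cycle shows that a given 6-cycle fails to be $\delta_0$-admissible with probability at most $6\exp(-\delta_0^2 np_0/3) = e^{-n^{1/8-o(1)}}$, and a binomial concentration argument identical to Step~2 of the proof of Theorem~\ref{4sprink} removes the non-admissible ones and leaves at least $m$ disjoint stable $\delta_0$-admissible 6-cycles. Let $\mathcal{B}$ be this desired event, and let $\mathcal{C} = \{C_1,\dots,C_m\}$ be the first $m$ such cycles in a pre-fixed ordering.

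Step 2 generalizes Lemma~\ref{lm:sprink} to 6-cycles: for a $\delta_0$-admissible stable 6-cycle $H$, summing over the $\binom{3}{2}\cdot 2 = 6$ same-side pairs and distinguishing (Case~1) two sprinkled arcs meeting at a new external vertex from (Case~2) one sprinkled arc completing an already-existing neighbour,
\[
\mathbf{P}\!\left(W^c(H,\D)\mid \D^{(0)}\right) I_{W(H,\D^{(0)})\cap \mathcal{M}_{\delta_0}(H)} \le 6np_1^2 + 12(1+\delta_0)np_0 p_1.
\]
Since $p, p_1 \le n^{-3/4 + 2/\log n}$ and $p_0 = n^{-3/4 - 3/(2\log n)}$, both $np_1^2$ and $np_0 p_1$ are $n^{-1/2+o(1)}$, so the right-hand side is at most $\eta := n^{-1/2 + o(1)}$. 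Exactly as in~\eqref{destroy}, the $m$ cycles of $\mathcal{C}$ are pairwise vertex-disjoint, so the destabilizing-arc sets are pairwise disjoint and the events $\{W^c(C_i,\D)\}_{i=1}^m$ are mutually independent conditional on $\D^{(0)}$, giving
\[
\mathbf{P}\!\left(\mathcal{B}\cap\{\D\text{ has no stable 6-cycle}\}\right) \le \eta^{m} = e^{-\Omega(n\log n)}.
\]
Plugging this into~\eqref{time0} with $\ell=3$: the PNE-enumeration term is $\O{n^6\poly(n)}\cdot e^{-n^2 p^2} = o(1)$ since $n^2 p^2 = n^{1/2+o(1)}$; the Lemke-Howson term is $\O{\clh^n}\cdot e^{-\Omega(n\log n)} = o(1)$; and the expected running time is $\O{n^2}$.

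The main obstacle is the calibration performed in Step~1: the precise exponent $-3/(2\log n)$ in $p_0$ is exactly what makes $n^6 p_0^8$ equal the fixed constant $e^{-12}$, which is just small enough to feed Lemma~\ref{le:bonc} with a constant $\beta$, while simultaneously $(np_0)^6 = n^{3/2-o(1)}$ remains large enough for Claim~\ref{cl:X=0} to produce a linear number of disjoint 6-cycles. A weaker $p_0$ would push $n^6 p_0^8 \ge 1$ and destroy the unstable-cycle bound; a stronger $p_0$ would fail Claim~\ref{cl:X=0}. Matching the two simultaneously is the only non-routine aspect of the argument.
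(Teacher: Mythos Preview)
Your proposal is correct and follows essentially the same sprinkling argument as the paper: same $p_0 = n^{-3/4-3/(2\log n)}$, same decomposition via Claim~\ref{cl:X=0} and Lemma~\ref{le:bonc} for $\D^{(0)}$, the same $6np_1^2 + 12(1+\delta_0)np_0p_1$ bound for a $\delta_0$-admissible stable $6$-cycle becoming unstable after sprinkling, and the same product bound over $m$ disjoint cycles. Your parameter choices $\delta_0 = n^{-1/16}$ and $\beta = 1/100$ differ from the paper's $\delta_0 = n^{-1/9}$ and $\beta = 47/300$, but both work; your added remark that $n^6p_0^8 = e^{-12}$ sits exactly at the $\ell=3$ boundary of $\calL(\delta)$ is a nice explanation the paper omits. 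One cosmetic point: the subtraction ``at least $n/6 - n/100$ disjoint stable $6$-cycles survive'' is not literally Lemma~\ref{Pfail} but the direct observation that within any collection of $n/6$ disjoint cycles at most $n/100$ can be unstable if the graph has no $n/100$ disjoint unstable cycles; the logic is fine.
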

\end{tcolorbox}

\begin{proof}
Recall that the expected running time of the generic algorithm with parameter $\ell=3$ is at most
\begin{equation}
\label{sprT2}
O(n^2) + \O{ e^{-p^2n^2}  n^6 \poly(n) } + \O{ \clh^n }\,
\mathbf{P} (\D \text{ has no stable $6$-cycle} ).
\end{equation}

Note that
\begin{align}
\label{3prob2}
&\mathbf{P}_p( \D \text{ has no stable $6$-cycle} )
\le
\mathbf{P}\left(
\mathcal{D}_n^{(0)} \text{ has less than }
\dfrac n{100} \mbox{ disjoint stable } 6\mbox{-cycles}\right) \notag\\
&\quad+
\mathbf{P}\left( 
\mathcal{D}_n^{(0)} \text{ has at least }
\dfrac n{100} \mbox{ disjoint stable } 6\mbox{-cycles}
\cap
\D \text{ has no disjoint stable } 6\mbox{-cycle} \right). 
\end{align}

The analysis similar to \eqref{eq:ch1} yields that, for $\delta_0= n^{-1/9}$,
\begin{align*}
1 - \mathbf{P}( \mathcal{M}_{\delta_0}(H) )
\le
6\exp\left(-\dfrac{1}{3}\delta_0^2 np_0 \right)
= e^{- \Omega( n^{1/36} ) }.
\end{align*}
Moreover,
$$
\begin{aligned}
&\mathbf{P}\left( 
\mathcal{D}_n^{(0)} \text{ has at least }
\dfrac n{101} \mbox{ disjoint stable $\delta_0$-admissible } 6\mbox{-cycles}\;
and \;
\D
\text{ has no stable } 6\mbox{-cycle} \right) \\
&\le \bR{ 6np_1^2 + 12 n(1+\delta)p_0p_1 }^{ n/101} =  e ^{-
\Omega(n\log n)}.
\end{aligned}
$$


By the union bound,
\begin{align*}
&\mathbf{P}\left(
\mathcal{D}_n^{(0)} \text{ has less than }
\dfrac n{100} \mbox{ disjoint stable } 6\mbox{-cycles}\right) 
\le 
\mathbf{P} \left(\mathcal{D}_n^{(0)} \text{ has less than $\dfrac n6$ disjoint $6$-cycles} \right) \\
&\qquad+ \mathbf{P}\left( 
\mathcal{D}_n^{(0)} \text{ has at least } \dfrac{47}{300} n
\mbox{ disjoint unstable } 6\mbox{-cycles}\right).
\end{align*}


By Lemma \ref{le:bonc}, we have 
\begin{align*}
\mathbf{P}\left( 
\mathcal{D}_n^{(0)} \text{ has at least } \dfrac{47}{300} n
\mbox{ disjoint unstable } 4\mbox{-cycles}\right)
\le
\bR{\dfrac{1800}{47} e n^{6} p^{8}_0 }^{\frac{47}{300} n},
\end{align*}
By Claim \ref{cl:X=0}, we have
\begin{align*}
&\mathbf{P} \left( \mathcal{D}_n^{(0)} \text{ has less than $\dfrac n6$ disjoint $6$-cycles} \right) \\
&\qquad\le 
\dfrac{4^n}{ n} \Exp{
- \dfrac{ 1 }{36 \cdot 2^6} 
\bR{ n p_0 }^{6} }
+ 
\dfrac{4^n}{ n} \Exp{ 
- \dfrac{1}{150 \cdot 3^8}n^2 p_0 }.
\end{align*}
Finally, the probability that among $n/100$ disjoint cycles there are less than $n/101$  that  are $\delta_0$-admissible is bounded, using the Chernoff bound, by $ e ^{-
\Omega(n\log n)}$.
This completes the proof
in view of \eqref{sprT2} and \eqref{3prob2}.
\end{proof}

\section{Conclusion}\label{sec:conc}
We presented an expected polynomial-time algorithm for random
win-lose games, for nearly all values of $p$. Two important open problems remain. The first is to determine the computational complexity with respect to the two small intervals for $p$ left unresolved in this paper. 
Second, beyond Bernoulli distributions, do expected polynomial-time algorithms exist for random games in which the payoff entries are drawn independently from other distributions, such as the normal and uniform distributions?


%

\begin{thebibliography}{99}

\bibitem{AKV05} T. Abbott, D. Kane and P. Valiant, ``On the complexity 
of two-player win-lose games'', {\em Proceedings of $46$th Symposium on 
Foundations of Computer Science (FOCS)}, pp113--122, 2005.

\bibitem{AOV07} L. Addario-Berry, N. Olver and A. Vetta, ``A polynomial-time algorithm for finding Nash equilibria in 
planar win-lose games'', {\em Journal of Graph Algorithms and Applications}, {\bf 11(1)}, pp309--319, 2007.

\bibitem{AGM11}
B. Adsul, J. Garg, R. Mehta and M. Sohoni, ``Rank-1 bimatrix games: a homeomorphism and a polynomial time algorithm", 
{\em Proceedings of $43$th Symposium on the Theory of Computing (STOC)}, pp195--204, 2011.

\bibitem{ACSZ19}
B. Amiet, A. Collevecchio, M. Scarsini and Z.
 Zhong, ``Pure {N}ash equilibria and best-response dynamics in random
 games",
 {\em Mathematics of Operations Research},
 {\bf 46}, pp1552--1572, 2021.
 
\bibitem{BVV07} I. B\'{a}r\'{a}ny, S. Vempala and A. Vetta, ``Nash equilibria
in random games'', {\em Random Structures and Algorithms}, {\bf 31(4)}, pp391--405, 2007.

\bibitem{CDT06} 
X. Chen, X. Deng and S. Teng, ``Sparse games are hard'', {\em Proceedings of the $2$nd Workshop on Internet and Network Economics (WINE)}, pp262--273, 2006.

\bibitem{CDT09} 
X. Chen, X. Deng and S. Teng, ``Settling the complexity of computing two-player
Nash equilibria'', {\em Journal of the ACM}, {\bf 56(3)}, pp1--57, 2009.


\bibitem{CLR06}
B. Codenotti, M. Leoncini and G. Resta,
``Efficient computation of {N}ash equilibria for very sparse win-lose bimatrix games", {\em Proceedings of the $14$th European Symposium on Algorithms (ESA)}, pp232--243, 2006.


\bibitem{DGP09} C. Daskalakis, P. Goldberg and C. Papadimitriou, ``The complexity of computing a Nash equilibrium", 
{\em SIAM Journal on Computing}, {\bf 39(1)}, pp195--259, 2009.

\bibitem{Gold57}
A. Goldman, ``The probability of a saddlepoint'', {\em The American Mathematical Monthly}, {\bf 64(10)}, pp729--730, 1957.

\bibitem{GGN68}
K. Goldberg, A. Goldman and M. Newman, ``The probability of an equilibrium point'', {\em Journal of Research of the National Bureau of Standards}, {\bf 72B(2)}, pp93--101, 1968.


\bibitem{JLR00}
S.~Janson, T.~{\L}uczak and A.~Ruci\'nski, {\em Random Graphs}, 
John Wiley \& Sons, 2000.

\bibitem{LH64}
C. Lemke and J. Howson, ``Equilibrium points in bimatrix games'', {\em Journal of the Society for Industrial and Applied Mathematics}, {\bf 12}, pp413--423, 1964.


\bibitem{McL05}
A. McLennan, ``The expected number of Nash equilibria of a normal form game'',
{\em Econometrica}, {\bf 73(1)}, pp141--174, 2005.

\bibitem{MB05}
A. McLennan and J. Berg, ``Asymptotic expected number of Nash equilibria of a two-player normal form games'',
{\em Games and Economic Behavior}, {\bf 51(2)}, pp264--295, 2005.


\bibitem{Meh14}
R. Mehta, ``Constant rank two-player games are PPAD-hard'',
{\em Proceedings of 46th Symposium on the Theory of Computing (STOC)}, pp258--267, 2014.

\bibitem{Nash51}
J. Nash, ``Non-cooperative games'',
{\em Annals of Mathematics}, {\bf 54(2)}, pp286--295, 1951.

\bibitem{Pap01}
C. Papadimitriou, ``Algorithms, games, and the Internet'',
{\em Proceedings of the 33rd ACM Symposium on Theory of Computing (STOC)}, pp49–-753, 2001.

\bibitem{R2000}
Y. Rinott and M. Scarsini, ``On the number of pure strategy Nash equilibria in random games", 
{\em Games and Economic Behavior}, {\bf 33(2)}, pp274--293, 2000.
 

\bibitem{savani2006hard}
R. Savani and B. Von~Stengel,  ``Hard-to-solve bimatrix games'',
\newblock {\em Econometrica}, {\bf 74(2)}, pp397--429, 2006.


\bibitem{VN28}
J. Von~Neumann, ``Zur Theorie der Gesellschaftsspiele'', {\em Mathematische Annalen},
{\bf 100}, pp295–-320, 1928.

\bibitem{von2002computing}
B. Von~Stengel, ``Computing equilibria for two-person games'',
{\em Handbook of Game Theory with Economic Applications},
 Vol. 3, pp1723--1759, 2002.



\end{thebibliography}
\end{document}